\def\floor#1{\lfloor #1 \rfloor}
\def\ceil#1{\lceil #1 \rceil}
\newcommand{\no}[1]{}
\newcommand{\settwo}[2]{\ensuremath{\{\,#1\,|\,#2\,\} }}
\newtheorem{fact}{Fact}
\renewenvironment{proof}{\trivlist\item[]\emph{Proof}:}%
{\unskip\nobreak\hskip 1em plus 1fil\nobreak$\Box$
\parfillskip=0pt%
\endtrivlist}
\newenvironment{itemize*}%
  {\begin{itemize}%
    \setlength{\itemsep}{0pt}%
    \setlength{\parskip}{0pt}%
    \setlength{\parsep}{0pt}%
    \setlength{\topsep}{0pt}%
    \setlength{\partopsep}{0pt}%
  }%
  {\end{itemize}}%
\newcommand{\cT}{{\cal T}}
\newcommand{\cL}{{\cal L}}
\newcommand{\cK}{{\cal K}}
\newcommand{\oR}{{\overline R}}
\newcommand{\oV}{{\overline V}}
\newcommand{\oP}{{\overline P}}
\newcommand{\eps}{\varepsilon}
\newcommand{\tpred}{\mathrm{tpred}}
\newcommand{\bT}{\ensuremath{\mathbbm{T}}}
\newcommand{\Col}{\mathrm{Col}}
\newcommand{\hmina}{h_{\min}}
\newcommand{\hmaxa}{h_{\max}}
\newcommand{\ohmin}{\overline{h_{\min}}}
\newcommand{\ohmax}{\overline{h_{\max}}}
\newcommand{\hmin}{\mathit{min}}
\newcommand{\hmax}{\mathit{max}}
\newcommand{\gmin}{\mathit{gmin}}
\newcommand{\gmax}{\mathit{gmax}}
\begin{document}
\title{Optimal Color Range Reporting in One Dimension}
\author{ 
   Yakov Nekrich\inst{1}  and Jeffrey Scott Vitter\inst{1} }
\institute{
   The University of Kansas. {\tt yakov.nekrich@googlemail.com, jsv@ku.edu}
}


\date{}
\maketitle

\begin{abstract}
Color (or categorical) range reporting is a variant of the orthogonal range reporting problem in which every point in the input is assigned a \emph{color}. While the answer to an orthogonal point reporting query contains all points in the query range $Q$, the answer to a color reporting query contains only  distinct colors of points in $Q$. 
In this paper we describe an $O(N)$-space data structure that answers one-dimensional color reporting queries in optimal $O(k+1)$ time, where $k$ is the number of colors in the answer and $N$ is the number of points in the data structure. Our result can be also dynamized and extended to the external memory model.
\end{abstract}

\no{
\begin{abstract}
Color (or categorical) range reporting is a variant of the orthogonal range reporting problem in which every point stored in a data structure is assigned a \emph{color}. 
While the answer to an orthogonal (point) reporting query contains all points in the query range $Q$, the answer to a color reporting query contains only the distinct colors of points in $Q$. 

In this paper we describe an $O(N)$ space data structure that answers one-dimensional color reporting queries in optimal $O(k+1)$ time, where $k$ is the number of colors in the answer and $N$ is the number of points in the data structure.
\no{
we describe data structures for color range reporting in one dimension. The space usage and query time of our solutions match 
those of the best known  data structures for the corresponding point reporting problems. In particular, we describe an $O(N)$ space data structure that answers one-dimensional color reporting queries in optimal $O(k+1)$ time, where $k$ is the number of colors in the answer and $N$ is the number of points in the data structure. 
Furthermore, we describe a  data structure that uses $O(N\log^{\eps}N)$ space and answers two-dimensional color reporting queries in $O(\log \log U + k)$ time, 
where $U$ is the size of the universe. We can also support two-dimensional colored queries in $O((k+1)\log^{\eps}N)$ and $O(\log \log U + k\log\log N)$ time  using data structures that need $O(N)$ and $O(N\log \log N)$ words of space respectively. 
}

\end{abstract}
}
\section{Introduction}
 In the orthogonal range 
reporting problem, we store a set of points $S$ in a data structure so that 
for an arbitrary range $Q=[a_1,b_1]\times \ldots\times [a_d,b_d]$ all 
points from $S\cap Q$ can be reported. Due to its importance, one- and multi-dimensional range reporting was extensively studied in computational geometry and database communities. The following situation frequently arises in different areas of computer science: a set of $d$-dimensional objects $\{\,( t_1,t_2,\ldots,t_d)\,\}$ must be preprocessed so that we can enumerate all objects satisfying $a_i\le t_i \le b_i$ for arbitrary $a_i,b_i$, $i=1,\ldots,d$. This scenario can be modeled by the orthogonal range reporting problem. 

The objects in the input set can be  distributed into
\emph{categories}. Instead of enumerating all objects,  we may want  to report  distinct categories of objects in the given range.  This situation can be modeled by the color (or categorical) 
range reporting problem: every point in a set $S$ is assigned a color (category); we pre-process $S$, so that for any $Q=[a_1,b_1]\times \ldots\times [a_d,b_d]$ the distinct colors of points in $S\cap Q$ can be reported.

Color range reporting  is usually considered to be a more complex problem than point reporting. 
For one thing, we do not want to report the same color multiple times. In this paper 
we show that complexity gap can be closed for one-dimensional color range reporting. We describe color reporting data structures with the same space usage and query time as the best known corresponding structures for point reporting. \no{Furthermore, our  static one-dimensional  data structure 
is optimal.} Moreover we extend our result to  the external memory model.

\no{
The following situation frequently arises in different areas of computer science: a set of multi-dimensional objects $\{\,<t_1,t_2,\ldots,t_d>\,\}$ must be preprocessed so that we can enumerate all objects satisfying $a_i\le t_i \le b_i$ for arbitrary $a_i,b_i$, $i=1,\ldots,d$. This scenario can be 
modeled by the orthogonal range reporting problem. In the orthogonal range 
reporting problem, we store a set of points $S$ in a data structure so that 
for an arbitrary range $Q=[a_1,b_1]\times \ldots\times [a_d,b_d]$ all points from $S\cap Q$ can be reported. Due to its importance, one- and multi-dimensional range reporting was extensively studied in computational geometry and database communities. 

Furthermore, the objects in the input set can be  assigned 
\emph{categories} and we may want  to list  distinct categories of objects in the given range.  This situation can be modeled by the color (or categorical) 
range reporting problem: every point in a set $S$ is assigned a color (category); we pre-process $S$, so that for any $Q=[a_1,b_1]\times \ldots\times [a_d,b_d]$ the distinct colors of points in $S\cap Q$ must be reported.
}

\no{
An orthogonal range reporting query $Q=[a_1,b_1]\times \ldots \times [a_d,b_d]$ on a set of $d$-dimensional points $S$ 
asks for all points in $Q\cap S$. 
The orthogonal range reporting problem is to store $S$ in  a data structure that supports orthogonal 
range reporting queries for an arbitrary query rectangle $Q$. 
The following situation frequently arises in different areas of computer science: a set of multi-dimensional objects $\{\,<t_1,t_2,\ldots,t_d>\,\}$ must be preprocessed so that we can identify all objects satisfying $a_i\le t_i \le b_i$ for arbitrary $a_i,b_i$, $i=1,\ldots,d$. 
Such scenario can be modeled by a data structure for range reporting queries. 
Due to its importance, one- and multi-dimensional range reporting was extensively studied in computational geometry and database communities.

Color range reporting is a variant of the range reporting 
problem: Every point is assigned a  color (or category).
An answer to a color reporting query contains the distinct colors of all points in the query range. This kind of queries 
is useful in situations when only categories of objects in the 
query range should be reported. For instance, we may be interested in . 
}

\no{
Color range reporting problem is usually considered to be a more complex problem than the point reporting. In this paper 
we show that complexity gap can be closed for one- and two-dimensional color range reporting. We describe color reporting data structures with the same space usage and query time as the corresponding structures for point reporting. 
}

\paragraph{Previous Work.}
We can easily report points in a one-dimensional range 
$Q=[a,b]$ by searching for the successor of $a$ in $S$, 
$succ(a,S)=\min\settwo{e\in S}{e\ge a}$. If $a'=succ(a,S)$
is known, we can traverse the sorted list of points in $S$ starting at $a'$ and report all elements in $S\cap [a,b]$. 
We can find the successor of $a$ in $S$ in $O(\sqrt{\log N/\log \log N})$ time~\cite{BF02}; if the universe size is $U$, i.e., if 
all points are positive integers that do not exceed $U$, then the successor can be 
found in $O(\log \log U)$ time~\cite{BoasKZ77}. 
Thus we can report all points in $S\cap [a,b]$ in $O(\tpred(N)+k)$ time for 
$\tpred(N)=\min(\sqrt{\log N/\log \log N}, \log \log U)$.
Henceforth $k$ denotes the number of elements (points or colors) in the query answer.  
It is not possible to find the successor in $o(\tpred(N))$ time unless the universe size $U$ is very small or the space usage of the data structure is very high; see e.g.,~\cite{BF02}.
However, reporting points in a one-dimensional range  takes 
less time than searching for a successor. 
In their fundamental paper~\cite{MNSW98}, Miltersen et al. showed that one-dimensional point reporting queries can be answered in $O(k)$ time using an $O(N\log U)$ space data structure. 
Alstrup et al. ~\cite{ABR01} obtained  another surprising result:  they presented an $O(N)$-space data structure that answers point reporting queries in $O(k)$ time and thus achieved both optimal query time and optimal space usage for this problem. 
The data structure for one-dimensional point reporting can be 
dynamized so that queries are supported in $O(k)$ time and 
updates are supported in $O(\log^{\eps} U)$ time~\cite{MPP05}; henceforth $\eps$ denotes an arbitrarily small positive constant. We refer to~\cite{MPP05} for further update-query time trade-offs. Solutions of the one-dimensional 
point reporting problem are based on finding an arbitrary element $e$ in a 
query range $[a,b]$; once such $e$ is found, we can traverse the sorted list of points until all points in $[a,b]$ are reported. Therefore it is straightforward to extend  point reporting results  to the external memory model.

Janardan and Lopez~\cite{JL93} and Gupta et al.~\cite{GuptaJS95} showed that one-dimensional color reporting queries can be answered in $O(\log N + k)$ time, both in the static and the dynamic scenarios. Muthukrishnan~\cite{Muth02} 
described a static $O(N)$ space  data structure that answers queries in $O(k)$ time if all point coordinates are bounded by $N$. 
We can obtain  data structures that use $O(N)$ space and answer queries in $O(\log\log U + k)$ or $O(\sqrt{\log N/\log \log N}+k)$ time using the reduction-to-rank-space technique. 
No data structure that answers one-dimensional color reporting 
queries in $o(tpred(N))+ O(k)$ time was previously known.  
A dynamic data structure of Mortensen~\cite{M03} supports queries and 
updates in $O(\log \log N + k)$ and $O(\log \log N)$ time respectively if the values of all elements are bounded by $N$.

Recently, the one- and two-dimensional color range reporting problems in the external memory model were studied in several papers~\cite{LarsenP12,Nekrich12,LarsenvW13}. Larsen and Pagh~\cite{LarsenP12} described a data structure that uses linear space and answers one-dimensional color reporting queries in 
$O(k/B +1)$ I/Os if values of all elements are bounded by $O(N)$. In the case when values of elements are unbounded the best previously known data structure needs $O(\log_B N + k/B)$ I/Os to answer a query; this result can be obtained by combining the data structure from~\cite{ArgeSV99} and  reduction of one-dimensional color reporting to three-sided\footnote{A three-sided range query is a two-dimensional orthogonal range query that is open on one side. For instance, queries $[a,b]\times [0,c]$ and $[a,b]\times [c,+\infty]$ are three-sdied queries.} point reporting~\cite{GuptaJS95}. 

In another recent paper~\cite{ChanDSW12}, Chan et al.  described a data structure that supports the following queries on a set of points whose values are bounded by $O(N)$:  
for any query point $q$ and any integer~$k$, we can report the first $k$  colors 
that occur after $q$. This data structure can be combined with the result 
from~\cite{ABR01} to answer queries in $O(k+1)$ time. Unfortunately, the solution 
in~\cite{ChanDSW12} is based on the hive graph data structure~\cite{Chazelle86}. 
Therefore it cannot be used to solve the problem in external memory or to obtain a dynamic solution. 

\no{
The paper of Mortensen~\cite{M03} describes a reduction of  one-dimensional color reporting problem to the problem of reporting two-dimensional uncolored points in a three-sided range$Q=[a,b]\times [0,c]$ such that the $y$-coordinates of uncolored points are bounded by $\log N$. This reduction can be used to obtain a data structure that uses $O(N)$}

\paragraph{Our  Results.}
As can be seen from the above discussion and Table~\ref{tab:res}, there are significant complexity gaps  between color reporting and point 
reporting data structures in one dimension. We show in this paper that it is possible to close these gaps. 

In this paper we show that one-dimensional color reporting queries can be answered in constant time per reported color for an arbitrarily large size of the  universe.
Our data structure uses $O(N)$ 
space and supports color reporting queries in $O(k+1)$ time. 
This data structure can be dynamized so that query time and space usage remain unchanged; the updates are supported in $O(\log^{\eps} U)$ time where $U$ is the size of the universe. 
\no{Our two-dimensional data structure uses $O(N\log^{\eps}N)$ words 
of space and answers queries in $O(\log \log U +k)$ time. 
We can further reduce the space usage; two other  data structures use $O(N)$ and $O(N\log \log N)$ space and answer 
two-dimensional queries in $O((k+1)\log^{\eps}N)$ and $O(\log \log U + k\log\log N)$ time.} The  new results are listed at the bottom of  Table~\ref{tab:res}.

Our internal memory results  are valid in the word RAM model of computation, the same model that was used in e.g.~\cite{ABR01,MPP05,Muth02}. In this model, we assume that any standard arithmetic operation and the basic bit operations can be performed in constant time. We also assume that each word of memory consists of $w\ge \log U\ge \log N$ bits, where $U$ is the size of the universe. That is, we make a reasonable and realistic assumption that the value of any element fits into one word of memory. 

Furthermore, we also extend our data structures to the external memory model. Our 
static data structure uses linear space and answers color reporting queries in $O(1+k/B)$ I/Os. 
Our dynamic external data structure also has optimal space usage and query cost; updates are supported in $O(\log^{\eps}U)$ I/Os.   
\no{Our dynamic data structure produces a list of colors that may contain several occurrences of the same color; however, this can happen only if the main memory size $M$ is very small. }

\begin{table*}[htb]
  \centering
  \begin{tabular}{|l|l|c|c|c|c|}
    \hline
Ref.\ &    Query & Space & Query & Universe  & Update\\ 
      &    Type  & Usage & Cost  &       & Cost    \\ \hline
\cite{ABR01}  &   Point Reporting  & $O( N)$  & $O(k+1)$ &  & static  \\
\cite{MPP05}  &   Point Reporting  & $O(N)$  & $O(k+1)$ &  & $O(\log^{\eps}U)$ \\
\hline
\cite{GuptaJS95,JL93}  &    Color Reporting  & $O( N)$  & $O(\log N +k)$ &  & $O(\log N)$ \\
\cite{Muth02}  &    Color Reporting  & $O( N)$  & $O(k+1)$ & $N$ &  static \\
\cite{Muth02}  &    Color Reporting  & $O( N)$  & $O(\log \log U + k)$ & $U$ & static  \\ 
\cite{Muth02}  &    Color Reporting  & $O( N)$  & $O(\sqrt{\log N/\log \log N} + k)$ &  & static  \\  
\cite{M03}  &    Color Reporting  & $O( N)$  & $O(\log \log N + k)$ & $N$ & $O(\log \log N)$ \\
\hline  
Our & Color Reporting  & $O(N)$  & $O(k+1)$ &  & static  \\
Our & Color Reporting  & $O(N)$  & $O(k+1)$ &  & $O(\log^{\eps}U)$\\ \hline
\end{tabular}
\caption{Selected previous results and new results for one-dimensional color reporting. The fifth  and the sixth row can be obtained by applying the reduction to rank space 
to the result from~\cite{Muth02}. 
}
  \label{tab:res}
\end{table*}

\no{
\begin{table*}[htb]
  \centering
  \begin{tabular}{|l|l|c|c|c|c|}
    \hline
Ref.\ &    Query & Space & Query & Grid  & Static/\\ 
      &    Type  & Usage & Cost  &       & Dynamic    \\ \hline
\cite{GJS95,JL93}  &    One-Dimensional  & $O( N)$  & $O(\log N +k)$ &  & dynamic  \\
\cite{Muth02}  &    One-Dimensional  & $O( N)$  & $O(k)$ & $N$ &  static \\
\cite{Muth02}  &    One-Dimensional  & $O( N)$  & $O(\log \log U + k)$ & $U$ & static  \\ 
\cite{Muth02}  &    One-Dimensional  & $O( N)$  & $O(\sqrt{\log N/\log \log N} + k)$ &  & static  \\  
\cite{M03}  &    One-Dimensional  & $O( N)$  & $O(\log \log N + k)$ & $N$ & dynamic \\
\hline  
\end{tabular}
\caption{Selected previous results for one- and   two-dimensional color reporting. The third  and the fourth row can be obtained by applying the reduction to rank space 
to the result from~\cite{Muth02}. 
}
  \label{tab:colres}


\vspace*{.9cm}
  \begin{tabular}{|l|l|c|c|c|c|}
    \hline
    Query & Space & Query & Grid  & Static/\\ 
    Type  & Usage & Cost  &       & Dynamic    \\ \hline
  One-Dimensional  & $O( N)$  & $O(k+1)$ &  & static  \\
    One-Dimensional  & $O(N)$  & $O(k+1)$ &  & dynamic \\
\hline
\end{tabular}
\caption{\label{tab:ourres} New results for one- and   two-dimensional color  reporting.
}
  
\end{table*}
}

\section{Static Color Reporting in One Dimension} \label{sec:statcol}
We start by describing a  static data structure that uses $O(N)$ space 
and answers color reporting queries in $O(k+1)$ time. 

All elements of a set $S$ are stored in a balanced binary tree $\cT$. Every leaf of $\cT$, except for the last one,  contains $\log N$ elements, the last leaf contains at most $\log N$ elements,  and every internal node has two children.
For any node $u\in \cT$, $S(u)$ denotes the set of all elements stored in the leaf descendants of $u$. 
For every color $z$ that occurs in $S(u)$, the set $Min(u)$ ($Max(u)$) contains the minimal (maximal) element $e\in S(u)$ of color $z$. The list $L(u)$ contains 
the $\log N$ smallest elements of $Min(u)$ in increasing order. The list 
$R(u)$ contains the $\log N$ largest elements of $Max(u)$ in decreasing order.
For every internal non-root node $u$  we store the list $L(u)$ if $u$ is the right child of its parent; if $u$ is the left child of its parent, we store the list $R(u)$ for $u$. All lists $L(u)$ and $R(u)$, $u\in \cT$, contain $O(N)$ elements in total since the tree has $O(N/\log N)$ internal nodes.

We define the middle value $m(u)$ for an internal node $u$ as the minimal 
value stored in the right child of $u$, $m(u)=\min\{\, e\,|\, e\in S(u_r)\,\}$
where $u_r$ is the right child of $u$. The following \emph{highest range ancestor} query plays a crucial role in the data structures of this and the following sections. 
The answer to the highest range ancestor query 
$(v_l,a,b)$ for a leaf $v_l$ and values $a<b$ is the highest ancestor $u$ of $v_l$, such that $a< m(u)\le b$; if $S\cap [a,b]=\emptyset$, the answer is undefined. 
The following fact elucidates the meaning of the highest range ancestor.
\begin{fact}\label{fact:eluc}
Let $v_a$ be the leaf that holds the smallest $e\in S$, such that $e\ge a$;
let $v_b$ be the leaf that holds the largest $e\in S$, such that $e\le b$.
\no{Let $w$ be the lowest common ancestor of $v_a$ and $v_b$.} 
Suppose that  $S(v_l)\cap [a,b]\not=\emptyset$ for some leaf $v_l$ and  $u$ is the answer to the highest range ancestor query $(v_l,a,b)$. Then  $u$ 
is the lowest common ancestor of $v_a$ and $v_b$. 
\end{fact}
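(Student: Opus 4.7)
The plan is to verify two complementary facts about $w := \mathrm{LCA}(v_a, v_b)$: that $w$ is an ancestor of $v_l$ satisfying $a < m(w) \le b$, and that no strict ancestor of $w$ satisfies this condition; together these identify $u$ with $w$.

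First I would argue that $w$ is an ancestor of $v_l$. Because $v_l$ holds an element of $[a,b]$, its sorted-order position lies in the leaf interval $[v_a, v_b]$ bounded by the extremal leaves of $S$ meeting $[a,b]$, and $w$'s subtree is exactly the smallest one spanning this interval. Next I would check $a < m(w) \le b$: the inequality $m(w) \le b$ is immediate because $v_b$ sits in $w$'s right subtree and stores an element $\le b$; for $m(w) > a$, every leaf of $w$'s right subtree sits strictly to the right of $v_a$ (which lies in $w$'s left subtree), and by the minimality of the smallest element of $S$ that is $\ge a$, each element stored in any such leaf strictly exceeds $\max S(v_a) \ge a$.

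Next I would rule out any proper ancestor $w'$ of $w$. If $w$ lies in $w'$'s left subtree, every leaf of $w'$'s right subtree sits strictly to the right of $v_b$; the maximality of the largest element of $S$ that is $\le b$ forces every element there to exceed $b$, whence $m(w') > b$. If $w$ lies in $w'$'s right subtree, the leftmost leaf $v^*$ of that right subtree either lies strictly to the left of $v_a$ in leaf order---in which case $v^*$'s elements are all $< a$ so $\min S(v^*) < a$---or coincides with the leftmost descendant of $w$'s left subtree, where one argues that $v_a$ itself still holds an element $\le a$, forcing $\min S(v_a) \le a$. Either way, $m(w') \le a$, violating the condition.

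The main obstacle is this last subcase: establishing that $\min S(v_a) \le a$ whenever $v_a$ happens to be the leftmost descendant of $w$'s left subtree (so that the leftmost leaf of $w'$'s right subtree is $v_a$ itself). This is the delicate technical point and requires using the structural properties of $\cT$ at the split point $w$ together with the defining property of $v_a$ as the leaf holding the smallest element of $S$ that is $\ge a$. Once this subcase is settled, combining the two verifications yields $u = w = \mathrm{LCA}(v_a, v_b)$, completing the argument.
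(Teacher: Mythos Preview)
Your overall strategy matches the paper's: verify $a < m(w) \le b$ for $w = \mathrm{LCA}(v_a, v_b)$, then rule out every proper ancestor. The paper's proof is terser but makes the same case split; for the case where $w$ lies in the right subtree of a higher node it simply asserts $m(u) \le a$, which is exactly the step you flag as the ``main obstacle.''

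You are right to be suspicious of this subcase, but it cannot be settled in the way you propose: the inequality $\min S(v_a) \le a$ need not hold, and in fact the Fact as stated is false. Take $S=\{5,10,15,\dots,80\}$ stored in eight leaves of two elements each, and let $a=38$, $b=52$. Then $v_a=\{40,45\}$, $v_b=\{50,55\}$, and $w=\mathrm{LCA}(v_a,v_b)$ is their parent with $m(w)=50$. But the root $r$ has $m(r)=\min\{40,\dots,80\}=40\in(38,52]$, so the highest range ancestor of $v_l\in\{v_a,v_b\}$ is $r\neq w$. Here $v_a$ is the leftmost leaf of $r$'s right subtree and $\min S(v_a)=40>38=a$, so neither your proposed inequality nor the paper's asserted bound $m(u)\le a$ holds.

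Both your argument and the paper's share this unfillable gap. What the query procedure actually requires, and what does hold, is the weaker property $S\cap[a,b]\subseteq S(u)$ for the highest range ancestor $u$: if some $e\in S\cap[a,b]$ lay outside $S(u)$, then the lowest common ancestor $u^*$ of $v_l$ and the leaf holding $e$ would be a proper ancestor of $u$ with one of $e$, $S(v_l)\cap[a,b]$ in each subtree, forcing $a<m(u^*)\le b$ and contradicting the maximality of $u$.
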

\begin{proof}
  Let $w$ denote the lowest common ancestor of $v_a$ and $v_b$. 
  Then $v_a$ and $v_b$ are in $w$'s left and right subtrees respectively. Hence, $a< m(w) \le b$ and $w$ is not an ancestor of $u$. If $w$ is a descendant of $u$ and $w$ is in the right subtree of $u$, then $m(u)\le a$. If $w$ is in the left subtree of $u$, then 
$m(w)>b$.
\end{proof} 
We will show that we can find $u$ without searching for $v_a$ and $v_b$ and answer highest range ancestor queries on a balanced tree 
in constant time.

For every leaf $v_l$, we store two auxiliary data structures.  All elements of $S(v_l)$ are 
stored in a data structure $D(v_l)$ that uses $O(|S(v_l)|)$ space and answers 
color reporting queries on $S(v_l)$ in $O(k+1)$ time. 
We also store a data structure $F(v_l)$ that 
uses $O(\log N)$ space; for any $a<b$, such that $S(v_l)\cap [a,b]\not=\emptyset$, $F(v_l)$  answers the 
highest range ancestor query $(v_l,a,b)$ in $O(1)$ time.   Data structures $D(v_l)$ and $F(v_l)$ will be described later in this section.   Moreover, we store all elements of $S$ 
in the data structure described in~\cite{ABR01} that supports one-reporting queries:  for  any $a<b$, some  element 
$e\in S\cap [a,b]$ can be found in  $O(1)$ time; 
if $S\cap [a,b]=\emptyset$, the data structure returns a dummy element $\perp$. 
Finally, all elements of $S$ are stored in a slow data structure that uses 
$O(N)$ space and answers color reporting queries in $O(\log n+ k)$ time. We can use e.g. the data structure from~\cite{JL93} for this purpose. 

\paragraph{Answering Queries.}
All colors in a query range $[a,b]$ can be reported with the following procedure. 
Using the one-reporting data structure from~\cite{ABR01},  we search for  some  $e\in S\cap[a,b]$ if at least one such $e$ exists. 
If no element $e$ satisfying $a\le e\le b$ is found, then $S\cap [a,b]=\emptyset$ and  the query is answered. 
Otherwise, let $v_e$ denote the leaf that contains $e$.
\no{ 
If all elements $e$, $a\leq e\leq b$, are contained in $S(v_e)$ and 
$S(v_n)$ where $v_n$ is the leaf that follows $v_e$ (respectively, in 
$S(v_e)$ and $S(v_p)$ where $v_p$ is the leaf that precedes $v_e$), then 
we can answer the query using the structures for $S(v_e)$ and $S(v_n)$ 
(respectively the structures for $S(v_e)$ and $S(v_p)$). Otherwise 
all elements of $S(v_e)$ belong to the interval $[a,b]$. }
Using $F(v_e)$, we search for the highest ancestor~$u$ of $v_e$ such that $a\le m(u)\le b$. If no such $u$ is found, then all $e$, 
$a\le e\le b$, are in $S(v_e)$. We can report all colors in 
$S(v_e)\cap [a,b]$ using $D(v_e)$. If $F(v_e)$ returned some node 
$u$, we proceed as follows.
Let $u_l$ and $u_r$ denote the left and the right children of $u$. 
We traverse the list $L(u_r)$ until an element $e'> b$ is found 
or the end of $L(u_r)$ is reached. We also traverse $R(u_l)$  until an element $e'< a$ is found or the end of $R(u_l)$ is reached.
If we reach neither the end of $L(u_r)$ nor the end of $R(u_l)$, then 
the color of every encountered element $e\in L(u_r)$, $e\le b$, and 
$e\in R(u_l)$, $e\ge a$, is reported. Otherwise the range 
$[a,b]$ contains at least $\log N$ different colors. In the latter  case we can use any data structure for one-dimensional color range reporting~\cite{JL93,GuptaJS95}
 to identify all colors from $S\cap[a,b]$ in $O(\log n+k)=O(k+1)$ time.
\begin{figure*}[tb]
  \centering
  \includegraphics[width=.6\textwidth]{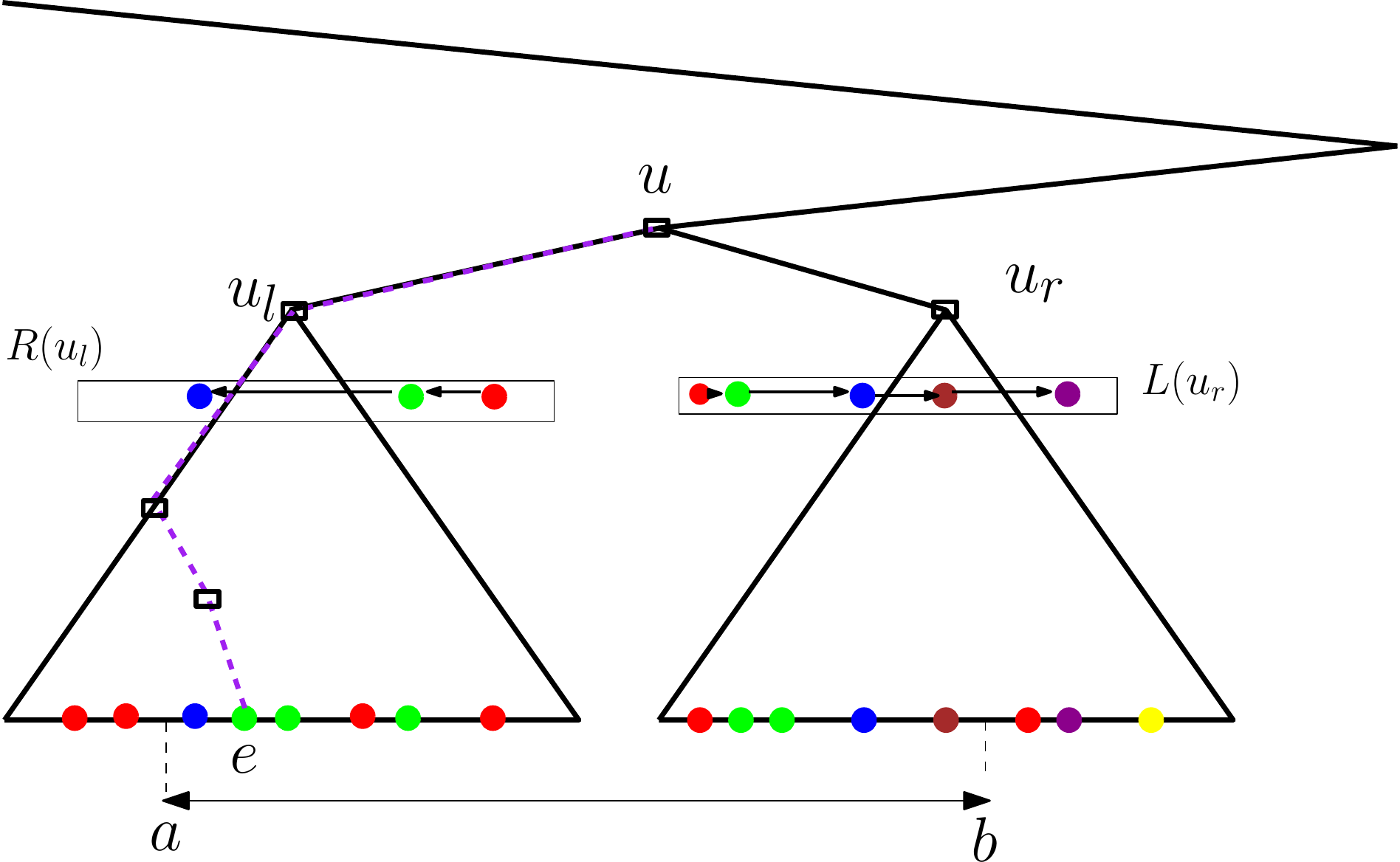}
  \caption{Answering a color reporting query $Q=[a,b]$: $e$ is an arbitrary element in $S\cap [a,b]$, $u$ is the highest range ancestor of the leaf that contains $e$, the path from $e$ to $u$ is indicated by a dashed line. We assume $\log n=5$, therefore $L(u_r)$ contains $5$ elements and the yellow point is not included in $L(u_r)$. To simplify the picture, we assumed that each leaf contains only one point\no{do not show the leaves of $\cT$}; only relevant parts of $\cT$ are on the picture. }
  \label{fig:prtree}
\end{figure*}

\paragraph{Leaf  Data Structures.} 
A data structure $D(v_l)$ answers color reporting queries on $S(v_l)$ as follows.  In~\cite{GuptaJS95}, the authors show how 
a one-dimensional color reporting query on a set of $m$   one-dimensional elements can be answered by answering a query $[a,b]\times [0,a]$ on a set of $m$ uncolored two-dimensional points. 
A standard priority search tree~\cite{McCreight85} enables us to answer queries of the form $[a,b]\times [0,a]$ on $m$ points 
in $O(\log m)$ time. 
Using a combination of fusion trees and priority search trees, 
described by Willard~\cite{W00}, we can answer queries 
 in $O(\log m/\log \log N)$ time. The data structure of Willard~\cite{W00} uses $O(m)$ space and a universal look-up table of size $O(\log^{\eps}N)$ for an arbitrarily small $\eps$. Updates 
are also supported in $O(\log m/\log \log N)$  time\footnote{In~\cite{W00}, Willard only considered queries on $N$ points, but extension to the case of any $m\le N$ is straightforward.}.

Since $S(v_l)$ contains $m=O(\log N)$ elements, we can answer colored queries on $S(v_l)$ in $O(\log m/\log \log N)=O(1)$ time.  Updates 
are also supported in $O(1)$ time; this fact will be used in  Section~\ref{sec:dyncol}.

Now we describe how $F(v_l)$ is implemented. Suppose that $S(v_l)\cap [a,b]\not=\emptyset$ for some leaf $v_l$. Let $\pi$ be the path from $v_l$ to the root of $\cT$. We say that a  node $u\in\pi$ is a \emph{left parent} if $u_l\in\pi$ for the left child $u_l$ of $u$; a  node $u\in\pi$ is a \emph{right parent}
if $u_r\in\pi$ for the right child $u_r$ of $u$. 
If $S(v_l)$ contains at least one $e\in [a,b]$, then the following is true.
\begin{fact}\label{fact:val}
If $u\in\pi$ is a left parent, then $m(u)> a$. If $u\in\pi$ is a right parent, then $m(u)\le b$.
\end{fact}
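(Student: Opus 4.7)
The plan is to pick any witness $e^*\in S(v_l)\cap[a,b]$ (which exists by the hypothesis of the Fact) and then use the BST ordering property of $\cT$, combined with the two possible positions of $v_l$ relative to $u$.

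First I would handle the left-parent case. If $u\in\pi$ is a left parent, then $v_l$ sits in the left subtree of $u$, so $e^*\in S(u_l)$. Since $\cT$ is a balanced \emph{binary search} tree, every element of $S(u_r)$ strictly exceeds every element of $S(u_l)$; in particular $m(u)=\min S(u_r) > e^* \ge a$, giving $m(u)>a$.

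Next I would handle the right-parent case. If $u\in\pi$ is a right parent, then $v_l$ is in the right subtree of $u$, so $e^*\in S(u_r)$. By definition of $m(u)$ as the minimum over $S(u_r)$, we get $m(u)\le e^* \le b$, which is the required bound.

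There is no real obstacle here: the argument is essentially bookkeeping on the search-tree ordering, and it works uniformly because the existence of the witness $e^*\in S(v_l)$ lets us pass the bound $a$ (resp.\ $b$) from $S(v_l)$ up to $S(u_l)$ (resp.\ $S(u_r)$) along the path $\pi$. The only point worth stressing in the write-up is that the strict inequality in the first case comes from the BST separation of left and right subtrees, while the non-strict inequality in the second case is just minimality.
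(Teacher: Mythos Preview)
Your proof is correct and follows essentially the same approach as the paper: pick a witness $e\in S(v_l)\cap[a,b]$, observe that $S(v_l)$ lies in the left (resp.\ right) subtree of $u$, and use the BST ordering (resp.\ the minimality in the definition of $m(u)$) to conclude $m(u)>e\ge a$ (resp.\ $m(u)\le e\le b$). The paper's write-up is slightly terser but the logic is identical.
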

\begin{proof} 
If $u\in \pi$ is a left parent, then $S(v_l)$ is in its left subtree. 
Hence, $m(u)$ is greater than  any $e\in S(v_l)$ and $m(u)> a$. 
If $u$ is the right parent, than $S(v_l)$ is in its right subtree. Hence, $m(u)$ is smaller than or equal to any $e\in S(v_l)$ and $m(u)\le  b$.
\end{proof}
\begin{fact}\label{fact:mono}
If $u_1\in \pi $ is a  left parent and $u_1$ 
is an ancestor of $u_2\in\pi$, then $m(u_1)> m(u_2)$. 
If $u_1\in \pi $ is  a right parent and $u_1$ 
is an ancestor of $u_2\in\pi$, then $m(u_2)> m(u_1)$. 
\end{fact}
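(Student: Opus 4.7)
The plan is to exploit two facts: the definition $m(u)=\min S(u_r)$, and the BST property that elements of $S(u)$ appear at $u$'s leaf descendants in left-to-right sorted order, so every element of $S((u)_l)$ is strictly less than every element of $S((u)_r)$. The first step is to locate $u_2$ relative to the children of $u_1$. On the path $\pi$, the unique child of $u_1$ that lies on $\pi$ is $(u_1)_l$ when $u_1$ is a left parent and $(u_1)_r$ when $u_1$ is a right parent; since $u_2$ is a strict descendant of $u_1$ on $\pi$, it lies in the subtree rooted at that child. Thus $S(u_2)\subseteq S((u_1)_l)$ in the left-parent case, and $S(u_2)\subseteq S((u_1)_r)$ in the right-parent case.

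For the left-parent case the argument is immediate: $m(u_2)\in S((u_2)_r)\subseteq S(u_2)\subseteq S((u_1)_l)$, and by the BST property every element of $S((u_1)_l)$ is strictly smaller than $m(u_1)=\min S((u_1)_r)$. Hence $m(u_1)>m(u_2)$.

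For the right-parent case, containment $S(u_2)\subseteq S((u_1)_r)$ only yields $\min S(u_2)\geq m(u_1)$, so the main obstacle is upgrading this to a strict inequality for $m(u_2)$. The key observation is that $u_2$, being internal (else $m(u_2)$ is undefined), has two children, and by the BST property $\min S(u_2)=\min S((u_2)_l)<\min S((u_2)_r)=m(u_2)$. Chaining gives $m(u_2)>\min S(u_2)\geq \min S((u_1)_r)=m(u_1)$, as required.
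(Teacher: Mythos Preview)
Your proof is correct and follows essentially the same approach as the paper: locate $u_2$ in the appropriate subtree of $u_1$ via the left/right-parent definition, then use the BST ordering together with the definition of $m(\cdot)$. The paper's own argument is terser and simply asserts the conclusion ``by definition of $m(u)$''; your treatment of the right-parent case is more careful in that you explicitly upgrade $\min S(u_2)\ge m(u_1)$ to a strict inequality via $m(u_2)=\min S((u_2)_r)>\min S((u_2)_l)=\min S(u_2)$, a step the paper leaves implicit.
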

\begin{proof}
If $u_1$ is a left parent, then $u_2$ is in its left subtree. 
Hence, $m(u_1) > m(u_2)$ by definition of $m(u)$. 
If $u_2$ is a right parent, then $u_1$ is in its right subtree. 
Hence, $m(u_1) < m(u_2)$ by definition of $m(u)$.
\end{proof}
Suppose that we want to find the highest range ancestor of $v_l$ for a range $[a,b]$ such that $S(v_l)\cap [a,b]\not=\emptyset$.
Let $\cK_1(\pi)$ be the set of  middle values $m(u)$ for left parents $u\in \pi$ sorted by height;
let $\cK_2(\pi)$ be the set of $m(u)$ for right parents $u\in \pi$ sorted by height. 
By Fact~\ref{fact:mono}, elements of $\cK_1$ ($\cK_2$) increase (decrease) monotonously. 
By Fact~\ref{fact:val}, $m(u)> a$ for any $m(u)\in \cK_1$ and 
$m(u)<b$ for any $m(u)\in \cK_2$.
Using fusion trees~\cite{FW94}, we can search in 
$\cK_1$ and find the highest node $u_1\in\pi$ such that $u_1$ is a
left parent and $m(u_1)\le b$. We can also search in $\cK_2$
and find the highest node $u_2\in\pi$ such that $u_2$ is a
right parent and $m(u_2)> a$. Let $u$ denote the higher node 
among $u_1$, $u_2$. Then $u$ is the  highest ancestor of $v_l$ 
such that $m(u)\in [a+1,b]$.

\tolerance=1000
\paragraph{Removing Duplicates.}
When a query is answered, our  procedure returns a color $z$ two times if $z$  occurs in both $S\cap [a,m(u)-1]$ and $S\cap [m(u),b]$. We can easily produce a list without sorting  in which each color occurs exactly once. Let 
$\Col$ denote an array with one entry for every color that occurs 
in a data structure. Initially $\Col[i]=0$ for all $i$. We traverse the list of colors $\cL$ produced by the above described procedure. Every time when we encounter 
a color $z$ in $\cL$ such that $\Col[z]=0$, we set $\Col[z]=1$;
when we  encounter a color $z$ such that $\Col[z]=1$, we remove the corresponding entry from $\cL$. When the query is answered, we traverse 
$\cL$ once again and set $Col[z]=0$ for all $z \in \cL$. 
\begin{theorem}
   \label{theor:col1d}
There exists an $O(N)$-space data structure that supports one-dimensional  color range reporting queries in $O(k+1)$ time.
\end{theorem}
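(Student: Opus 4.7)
The plan is to establish the theorem in three stages: space, correctness of the reporting procedure, and the $O(k+1)$ query time. Space is essentially immediate from the construction. The tree $\cT$ has $O(N/\log N)$ internal nodes, and each non-root internal node contributes exactly one list ($L(u)$ or $R(u)$) of at most $\log N$ entries, for a total of $O(N)$. The per-leaf structures $D(v_l)$ and $F(v_l)$ have size $O(\log N)$ per leaf and hence $O(N)$ overall, the universal lookup table needed by~\cite{W00} has size $O(\log^{\eps}N)$, and the one-reporting structure of~\cite{ABR01} together with the slow color-reporting fallback of~\cite{JL93} each occupy $O(N)$ words.

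For correctness I would argue that every color of $S\cap[a,b]$ is emitted at least once, since the $\Col$-array post-pass eliminates duplicates in time linear in the output. If $S\cap[a,b]=\emptyset$, the one-reporting lookup returns $\perp$ and nothing is reported. Otherwise let $e$ be the returned element and $v_e$ its leaf. When $F(v_e)$ returns no node, no ancestor $u$ of $v_e$ satisfies $a<m(u)\leq b$; a short argument using Facts~\ref{fact:val}--\ref{fact:mono} then shows that $S\cap[a,b]\subseteq S(v_e)$ (otherwise, the lowest common ancestor of $v_e$ and any leaf holding another element of $S\cap[a,b]$ would be a valid range ancestor), so $D(v_e)$ answers the query. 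When $F(v_e)$ returns a node $u$, Fact~\ref{fact:eluc} identifies $u$ as the lowest common ancestor of the leaves holding the successor of $a$ and the predecessor of $b$, and by the BST property
\[
S\cap[a,b] \;=\; \bigl(S(u_l)\cap[a,m(u)-1]\bigr) \,\cup\, \bigl(S(u_r)\cap[m(u),b]\bigr).
\]

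The substantive part of correctness is the claim that scanning $R(u_l)$ recovers every color of the left subrange, and symmetrically for $L(u_r)$, except in the dense case where the fallback takes over. Fix a color $z$ appearing in $S(u_l)\cap[a,m(u)-1]$ and let $e_z$ be its maximum occurrence in $S(u_l)$. Then $e_z\geq a$, and $e_z<m(u)\leq b$ because every element of $S(u_l)$ is strictly less than the minimum of $S(u_r)$. Hence $e_z\in Max(u_l)\cap[a,m(u)-1]$. If the scan of $R(u_l)$ stops upon reading some value $<a$, then by the decreasing order of $R(u_l)$ the element $e_z$, being at least $a$, must have appeared earlier in the scan and so $z$ was reported. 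If instead the scan exhausts all $\log N$ entries without ever crossing below $a$, then $Max(u_l)\cap[a,b]$ has at least $\log N$ elements of pairwise distinct colors, so $k\geq\log N$, and the $O(\log N+k)=O(k)$ fallback answers the query in budget.

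The query time then follows by direct accounting: $O(1)$ for the one-reporting call, for the highest range ancestor query on $F(v_e)$, and (when invoked) for $D(v_e)$; $O(k+1)$ steps across the two monotone scans of $R(u_l)$ and $L(u_r)$ in the fast regime, replaced by $O(k)$ for the fallback in the dense regime; and linear-in-output cost for $\Col$-array deduplication. The step I expect to need the most care is the interplay between the BST inequality $e<m(u)$ for $e\in S(u_l)$ and the fact that $R(u_l)$ stores precisely the $\log N$ largest per-color maxima of $S(u_l)$: these together are what guarantee that $R(u_l)$ alone suffices to enumerate the colors of the left subrange whenever $k<\log N$, and otherwise that the fallback is affordable.
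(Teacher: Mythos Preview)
Your proposal is correct and follows the paper's approach exactly: it verifies the construction of Section~\ref{sec:statcol} (tree $\cT$, lists $L(u)$/$R(u)$, leaf structures $D(v_l)$ and $F(v_l)$, the one-reporting structure of~\cite{ABR01}, the $\Col$-array deduplication, and the $O(\log N+k)$ fallback) and accounts for space and time in the same way. In fact you spell out more explicitly than the paper does why the scan of $R(u_l)$ captures every color of $S(u_l)\cap[a,b]$ whenever it terminates early---namely that any $e_z\in Max(u_l)$ with $e_z\ge a$ must lie among the scanned prefix once an element $<a$ is reached, since $R(u_l)$ holds the $\log N$ largest elements of $Max(u_l)$ in decreasing order.
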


\section{Color Reporting in External Memory}
The static data structure of Section~\ref{sec:statcol} can be used for answering queries in external memory. 
We  only need to increase the sizes of $S(v_l)$, $R(u)$, and $L(u)$ to $B\log_B N$, and use an external memory variant of the slow data structure for color reporting~\cite{ArgeSV99}.  This approach enables us to achieve $O(1+k/B)$ query cost, but one important issue should be addressed. 
\no{it has one important drawback.} 
As explained in Section~\ref{sec:statcol}, the same color can be reported twice when a query is answered. However, we cannot get rid of duplicates in $O(1+k/B)$ I/Os using the method of Section~\ref{sec:statcol} because of its random access to the list of reported colors. Therefore we need to make further changes in our internal memory solution. 
For an element $e\in S$, let $prev(e)$ denote the largest element $e'\le e$  of the same color. For every element $e$ in  $L(u)$ and any  $u\in T$, we also store the value of $prev(e)$.

\no{
 We modify our data structure as follows. 

Each set $S(v_l)$ for a leaf  $v_l$ contains $B^2\log_B^2N$ points.  The list 
$R(u)$ contains $B\log_B N$ rightmost points from $Max(u)$. Instead of a list 
$L(u)$, we store the following array of lists. Let $Rcol_i(u)$ denote the set of colors of the first $i$ points in $R(u)$. The list $L_i(u)$ contains the first 
$B\log_B N$ points $p$  from $Min(u)$ such that $col(p)\not\in Rcol_i(u')$ for the left sibling $u'$ of $u$. All lists $R(u)$ and $L_i(u)$ contain $O(N)$ points. 
}

We define each set $S(v_l)$ for a leaf  $v_l$ to contain $B\log_BN$ points. Lists $L(v)$ and $R(v)$ for an internal node $v$ contain $B\log_B N$ leftmost points from $Min(v)$ (respectively, 
$B\log_B N$ rightmost points from $Max(v)$).
Data structures $F(v_l)$ are implemented as in Section~\ref{sec:statcol}. A data structure $D(v_l)$ supports color reporting queries on $S(v_l)$ and is implemented as follows. We can answer a one-dimensional color reporting query by answering a three-sided point reporting query on a set $\Delta$ of $|S(v_l)|$ two-dimensional points; see e.g.,~\cite{GuptaJS95}.
If $B\ge\log_2 N$, $S(v_l)$  and $\Delta$ contain $O(B^2)$ points. In this case we can use the  data structure from~\cite{ArgeSV99} that uses linear space and answers three-sided queries in $O(\log_B |S(v_l)|+k/B)=O(1+k/B)$ I/Os.
 If $B<\log N$, $S(v_l)$ and $\Delta$ contain $O(\log^2 N)$ points.  Using the data structure from~\cite{FW94}, we can find the predecessor of any value $v$ in a set of $O(\log^2 N)$ points  in $O(1)$ I/Os. Therefore we can apply the rank-space technique~\cite{GabowBT84} and reduce  three-sided 
point reporting queries on $\Delta$ to three-sided point reporting queries on a grid of size $|\Delta|$ (i.e., to the case when coordinates of all points are integers bounded by $|\Delta|$) using a constant number of additional I/Os. Larsen and Pagh~\cite{LarsenP12} described a linear space data structure  that answers three-sided point reporting queries for $m$ points  on an $m\times m$ grid  in $O(1+k/B)$  I/Os.  Summing up, we 
can answer a three-sided query on a set of $B\log_B N$ points in $O(1+k/B)$ I/Os. Hence, we can also answer 
a color reporting query on $S(v_l)$ in $O(1+k/B)$ I/Os using linear space.

A query $Q=[a,b]$ is answered as follows. We find the highest range ancestor $u$ for any $e\in S\cap [a,b]$ exactly as in Section~\ref{sec:statcol}. If $u$ is a leaf, we answer the query using $D(u)$. Otherwise the reporting procedure proceeds as follows. 
We traverse the list $R(u_l)$ for the left child $u_l$ of $u$ until some point $p< a$ is found. If $e\ge a$ for all $e\in R(u_l)$, then there are at least $B\log_B N$ different colors in $[a,b]$ and we can use a slow data structure to answer a query in $O(\log_B N + \frac{k}{B})=O(1+\frac{k}{B})$ I/Os. 
Otherwise we traverse $L(u_r)$ and report all elements $e$ such that 
$prev(e)< a$. If $prev(e)\ge a$ for $e\in L(u)$, then an element of the same color was reported when $R(u_l)$ was traversed. Traversal of $L(u_r)$ stops when 
an element $e>b$ is encountered or the end of $L(u_r)$ is reached. In the former case, we reported all colors in $[a,b]$. In the latter case the number of colors in $[a,b]$ is at least $B\log_B N$. This is because every element in $L(u_r)$ 
corresponds to a distinct color that occurs at least once in $[a,b]$. Hence, we 
can use the slow data structure and answer the query in $O(\log_B N +\frac{k}{B})=O(\frac{k}{B}+1)$ I/Os. 
\begin{theorem}
   \label{theor:col1dext}
There exists a linear-space data structure that supports one-dimensional  color range reporting queries in $O(k/B+1)$ I/Os.
\end{theorem}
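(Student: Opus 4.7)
The plan is to verify three claims about the construction just described: the data structure uses $O(N)$ space, every query costs $O(1+k/B)$ I/Os, and the reported list contains each color in $S\cap[a,b]$ exactly once. For the space bound, I would count contributions separately. Since each leaf holds $B\log_B N$ points, $\cT$ has $O(N/(B\log_B N))$ internal nodes; each contributes $O(B\log_B N)$ words to $L(v)$ and $R(v)$, summing to $O(N)$. Each $D(v_l)$ is linear in $|S(v_l)|$ by the two cases already handled (the Arge--Samoladas--Vitter priority search tree when $B\ge\log_2 N$, and the Larsen--Pagh structure on the rank-reduced grid when $B<\log N$), so the $D(v_l)$'s total $O(N)$. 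Each $F(v_l)$ needs $O(\log N)$ words and there are $O(N/(B\log_B N))$ leaves, so their total is also $O(N)$. The one-reporting structure of~\cite{ABR01} and the slow color-reporting structure of~\cite{ArgeSV99} each add $O(N)$, and storing $prev(e)$ with each $e\in L(u)$ is only constant overhead per stored record.

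For the query bound, I would trace the algorithm step by step. Finding some $e\in S\cap[a,b]$ via one-reporting costs $O(1)$ I/Os; loading the leaf $v_e$ and running $F(v_e)$ returns the highest range ancestor $u$ in $O(1)$ I/Os, since $\cK_1,\cK_2$ have $O(\log N)$ entries that fit into $O(1)$ blocks and the fusion-tree lookup is by Fact~\ref{fact:mono} a predecessor search on a monotone sequence of word-size keys. If $u$ is a leaf, $D(u)$ answers in $O(1+k/B)$ I/Os. Otherwise the sequential scan of $R(u_l)$ costs $O(1+t/B)$ I/Os where $t$ is the number of entries read; each such entry either triggers a reported color (contributing to $k$) or is the first element $<a$, which stops the scan. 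The same accounting applies to the scan of $L(u_r)$ up to the first $e>b$, restricted to entries with $prev(e)<a$. If either scan exhausts its list without hitting its boundary condition, at least $B\log_B N$ distinct colors lie in $[a,b]$, so the slow structure's $O(\log_B N+k/B) = O(k/B)$ cost is absorbed.

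Correctness follows from Fact~\ref{fact:eluc}: $u$ is the LCA of the leaves containing the leftmost and rightmost points of $S\cap[a,b]$, so every color occurring in $[a,b]$ appears either in $S(u_l)\cap[a,b]$ or in $S(u_r)\cap[a,b]$ (or both). The colors in $S(u_l)\cap[a,b]$ are exactly those whose maximum in $S(u_l)$ lies in $[a,b]$, and these are precisely the colors encountered while scanning $R(u_l)$ until the first value $<a$. The colors in $S(u_r)\cap[a,b]$ are those whose minimum in $S(u_r)$ lies in $[a,b]$, i.e.\ the prefix of $L(u_r)$ up to the first value $>b$; filtering by $prev(e)<a$ suppresses exactly those whose same color already occurs in $S\cap[a,e-1]$, and since any such earlier occurrence must lie in $S(u_l)\cap[a,b]$, it was already reported from $R(u_l)$. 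Thus the union of the two scans is duplicate-free and complete. The main obstacle is precisely this duplicate-suppression argument: the $\Col$-array trick from Section~\ref{sec:statcol} would cost $\Omega(k)$ I/Os in external memory, and it is the $prev$-pointer on $L(u)$ together with the scan-termination/fall-back dichotomy that lets us both avoid duplicates and charge every I/O either to $k/B$ or to the highest-range-ancestor lookup.
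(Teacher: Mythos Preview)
Your proposal follows the paper's approach closely and is essentially correct: enlarge leaves and the lists $L(\cdot),R(\cdot)$ to size $B\log_B N$, keep $prev(e)$ with each entry of $L(u)$, and replace the $\Col$-array deduplication by the $prev$-filter so that no random access is needed.

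One point in your I/O accounting is stated too loosely. For the scan of $L(u_r)$ you write that ``the same accounting applies \ldots\ restricted to entries with $prev(e)<a$.'' But the I/O cost of the scan is proportional to the number of entries \emph{read}, including those with $prev(e)\ge a$ that you suppress. You need the extra observation (which the paper states) that every entry of $L(u_r)$ you read with value $\le b$ is the minimum in $S(u_r)$ of a distinct color, and that color necessarily occurs in $[a,b]$; hence the total number of entries scanned in $L(u_r)$ before the first $e>b$ is at most $k$, regardless of how many are filtered out. With that sentence added, your charging argument goes through and matches the paper.
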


\no{ 
Otherwise suppose that $i$ rightmost points in $R(u_l)$ are greater than or equal to $a$. We traverse the list $L_i(u_r)$ until an element $e>b$ is encountered or the end of $L_i(u_r)$ is reached. If we reach the end of  $L_i(u_r)$, then the total number of colors in $[a,b]$ is at least $B\log_B N$ and we can use a slow data structure to answer queries as described above. Otherwise all distinct colors that occur in $[a,b]$ are the colors of points in the traversed parts of $R(u_l)$ and $L_i(u_r)$.
}

\section{Base Tree for Dynamic Data Structure}
\label{sec:dyncol}
In this section we show how the base tree and auxiliary data structures of the static solution can be 
modified for usage in the dynamic scenario. 
To dynamize the data structure of Section~\ref{sec:statcol},
we slightly change the balanced tree $\cT$ and secondary data structures: every leaf of $\cT$ now contains $\Theta(\log^{2}N)$ elements of $S$ and 
each internal node has $\Theta(1)$ children.  
We store the lists $L(u)$ and $R(u)$ in each internal non-root node of $u$. We associate several values $m_i(u)$ to each node $u$: for every child~$u_i$ of~$u$, except  the leftmost 
child $u_1$, $m_i(u)=\min\{\,e\,|\,e\in S(u_i)\,\}$. 
The highest range ancestor of a leaf $v_l$  is the highest 
ancestor $u$ of $v_l$ such that $a < m_i(u) \le b$ for at 
least one $i\not= 1$. Data structures $D(v_l)$ and $F(v_l)$ 
are defined as in Section~\ref{sec:statcol}. 
We also maintain a data structure of~\cite{MPP05} that reports an 
arbitrary element $e\in S\cap [a,b]$ if the range $[a,b]$ is not empty. \no{Finally  a dynamic data structure for 
one-dimensional color reporting that answers queries 
in $O(\log N + k)$ time is also maintained. 
}

\no{
To report colors in $S\cap [a,b]$, we proceed in the same way as before. We find the leaf that contains an element $e\in S\cap [a,b]$ and the highest range ancestor $u$ of $v_l$
such that $m_i\in [a,b]$ for some $i> 1$.
If $S\cap [a,b]\not=\emptyset$, we can find $u$ in $O(1)$ time. Suppose that the successor of $a$ in $S$ and the 
predecessor of $b$ in $S$ are stored in children $u_f$ and 
$u_g$ of $u$. Both $u_f$ and $u_g$ can be identified in $O(1)$ time. Then we traverse the lists $R(u_f)$ and $L(u_g)$ until 
an element $e< a$ (respectively $e>b$) is found. We also 
traverse $L(u_j)$  or $R(u_j)$ for $f<j < g$. If the total 
number of visited elements in at least one of the traversed 
lists equals to $\log N$, then $[a,b]$ contains at least 
$\log N$ different colors. Hence, we can answer the query 
in $O(\log N + k)=O(k)$ time using the previously known 
result~\cite{GuptaJS95}. If the total number of visited elements 
in every traversed list is smaller than $\log N$, then 
all colors that occur in $[a,b]$ are the colors of 
elements in $L(u_j)$, $f<j <g$, and in the visited parts 
of lists $R(u_f)$ and $L(u_g)$. 
Hence, in the latter case the query is also answered in 
$O(k)$ time. 
}

\no{
When an element $e$ is  deleted, we  traverse the 
path from the root to the leaf $v_l$ that contain $e$. In every node $u$ on the path to $v_l$, we update the lists $L(u)$ and $R(u)$ if necessary. 
We also update the data structure for one-reporting queries and the slow data structure from~\cite{GuptaJS95} that supports color reporting queries. 
All of the above can be done in $O(\log N)$ time. 
Insertions are supported in a symmetric way. 
}

We implement the base tree $\cT$ as the weight-balanced B-tree~\cite{AV03} 
with the leaf parameter $\log N$ and the branching parameter $8$. This means that every internal node has between $2$ and $32$ children and each leaf contains between $2\log^2 N$ and $\log^2 N$ elements. Each internal non-root node on level $\ell$ of $\cT$ has between $2\cdot 8^{\ell}\log^2 N$ and $(1/2)\cdot 8^{\ell}\log^2 N$ elements in its subtree. If the number of elements in some node $u$ exceeds $2\cdot 8^{\ell}\log^2 N$, we 
split $u$ into two new nodes, $u'$ and $u''$. In this case 
we insert a new  value $m_i(w)$ for the parent $w$ of $u$. 
Hence, we may have to update the data structures $F(v_l)$ 
for all leaf descendants of $w$. A weight-balanced B-tree is engineered in such a way that a split occurs at most once in a sequence of $\Omega(8^{\ell}\log^2 N)$ insertions (for our choice 
of parameters). Since $F(v_l)$ can be updated in $O(1)$ time, 
the total amortized cost incurred by splitting nodes is $O(1)$. When an element $e$ is deleted, we delete it from the set $S(v_l)$. 
\no{When an element $e$ is deleted and}
If  $e=m_i(u)$ for a deleted element $e$ and some node 
$u$, we do not change the value of $m_i(u)$. We also do not 
start re-balancing if some node contains too few elements in its subtree. But we re-build the entire tree $\cT$ if the total number of deleted elements equals  $n_0/2$, where 
$n_0$ is the number of elements that were stored in $\cT$ when 
it was built  the last time. Updates can be de-amortized without increasing the cost of update operations by scheduling the procedure of re-building nodes (respectively, re-building the tree)~\cite{AV03}
\paragraph{Auxiliary Data Structures.}
We implement $D(v_l)$ in the same way as in Section~\ref{sec:statcol}. Hence color
queries on $S(v_l)$ are answered in $O(\log |S(v_l)|/\log \log N)=O(1)$ time and updates are also supported in $O(1)$ time~\cite{W00}. 

 We need to modify data structures $F(v_l)$, however, because  $\cT$ is not a binary tree in the dynamic case. Let $\pi$ denote a path from $v_l$ to the root
for some leaf $v_l$. 
We say that a node $u$ is an $i$-node if $u_i\in \pi$ for the 
$i$-th child $u_i$ of $u$. 
\begin{fact}\label{fact:val2}
Suppose that $S(v_l)\cap [a,b]\not=\emptyset$ and $\pi$ is the path from $v_l$ to the root. If $u\in \pi$ is an $i$-node, then $m_j(u)< b$ for $1\le j\le i$
and $m_j(u)>a$ for $j>i$.
\end{fact}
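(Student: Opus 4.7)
The plan is to mimic the proof of Fact~\ref{fact:val} (its binary analogue), but adapted to the multiway tree $\cT$, by exploiting that the children of any internal node partition the sorted sequence of elements in its subtree into contiguous blocks. Concretely, the children $u_1, u_2, \ldots$ of $u$ are ordered so that every element of $S(u_j)$ is strictly less than every element of $S(u_{j+1})$, which means $m_{j+1}(u) = \min S(u_{j+1})$ strictly exceeds any element of $S(u_j)$ for all $j \ge 1$.

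I would begin by picking a witness $e \in S(v_l) \cap [a,b]$, which exists by hypothesis. Because $u$ is an $i$-node, $v_l$ is a descendant of the $i$-th child $u_i$, so $e \in S(u_i)$, and in particular $a \le e \le b$.

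For the ``right'' side of the claim (indices $j > i$): every element of $S(u_j)$ exceeds every element of $S(u_i)$, so $m_j(u) = \min S(u_j) > e \ge a$, as desired. For the ``left'' side (indices $1 \le j \le i$): when $j = i$ we have $m_i(u) = \min S(u_i) \le e \le b$; when $j < i$ we use the children's ordering to conclude $m_j(u) \le \max S(u_j) < \min S(u_i) \le e \le b$, which gives the strict inequality (the statement at $j = 1$ being vacuous, since $m_1(u)$ is undefined in the paper's setup).

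There is no real obstacle here beyond being careful about the boundary case $j = i$, where the inequality collapses to $m_i(u) \le b$ rather than strict, and about the convention that $m_1(u)$ is undefined: both points are handled by reading the statement with the conventions established in the paragraph introducing the $m_i$ values. The remainder is a direct transcription of Fact~\ref{fact:val}'s argument with ``left/right subtree'' replaced by ``subtree of a child with smaller/larger index.''
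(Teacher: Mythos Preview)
Your proposal is correct and is precisely the approach the paper intends: the paper does not spell out a proof but simply says to reuse the argument of Fact~\ref{fact:val}, which is exactly what you do. You also correctly flag the two minor issues with the statement as written---that $m_1(u)$ is undefined and that at $j=i$ one only obtains $m_i(u)\le b$ (consistent with how the paper actually \emph{uses} the fact a few lines later, writing $a<m_g(u_1)\le b$).
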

We say that a value $m_j(u)$ for  $u\in \pi$ is a \emph{left value}
if $j\le i$ and $u$ is an $i$-node.  A value  $m_j(u)$ for  $u\in \pi$ is a \emph{right value}
if $j> i$ and $u$ is an $i$-node.
\begin{fact}\label{fact:mono2}
If $m_j(u_1)$ is a left value and $u_1\in\pi$ is an ancestor of $u_2\in\pi$, then $m_j(u_1)\le m_f(u_2)$ for any $f$. 
If $m_j(u_1)$ is a right value and $u_1\in\pi$ is an ancestor of $u_2\in\pi$, then $m_j(u_1)> m_f(u_2)$ for any $f$. 
\end{fact}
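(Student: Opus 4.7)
The plan is to read off the inequality directly from the definition of $m_j$ and the left-to-right ordering of siblings in $\cT$. By hypothesis $u_1\in\pi$ is an ancestor of $u_2\in\pi$; let $i$ be the index such that $u_1$ is an $i$-node, and write $c_1,c_2,\ldots$ for the children of $u_1$ and $d_1,d_2,\ldots$ for those of $u_2$. Then $\pi$ passes through $c_i$ on its way to $v_l$, so $u_2$ is a descendant of $c_i$, and therefore $S(d_f)\subseteq S(u_2)\subseteq S(c_i)$ for every child $d_f$ of $u_2$.

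The structural property to invoke is that the children of any internal node of $\cT$ partition its point set in sorted order: if $u$ has children $u^{(1)},u^{(2)},\ldots$ enumerated left-to-right and $p<q$, then every element of $S(u^{(p)})$ is strictly smaller than every element of $S(u^{(q)})$. In particular $m_p(u)<m_q(u)$ whenever $p<q$, and $\max S(u^{(p)})<\min S(u^{(q)})=m_q(u)$ in the same range.

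With these two ingredients, the proof splits into two cases. In the left-value case $j\le i$, the sibling ordering among the children of $u_1$ gives $m_j(u_1)=\min S(c_j)\le \min S(c_i)$, and the containment $S(d_f)\subseteq S(c_i)$ gives $\min S(c_i)\le \min S(d_f)=m_f(u_2)$, so $m_j(u_1)\le m_f(u_2)$. In the right-value case $j>i$, the sibling ordering gives $m_f(u_2)=\min S(d_f)\le \max S(c_i)<\min S(c_j)=m_j(u_1)$.

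No serious obstacle is expected; the claim is the multi-way analogue of Fact~\ref{fact:mono} and drops out of the sibling ordering in a search tree as soon as we observe that $u_2$ lies inside the subtree rooted at $c_i$. The only point that deserves care is to make the containment $S(d_f)\subseteq S(c_i)$ explicit before invoking the monotonicity of minima, so that the two ingredients---sibling ordering at $u_1$ and subtree containment inside $c_i$---combine cleanly.
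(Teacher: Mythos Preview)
Your argument is correct and is precisely the multi-way analogue of the paper's Fact~\ref{fact:mono}; the paper itself does not spell out a proof of Fact~\ref{fact:mono2} but simply remarks that it follows ``using the same arguments as in Section~\ref{sec:statcol},'' which is exactly what you carry out. The key observation you make explicit---that $u_2$ lies in the subtree rooted at the $i$-th child $c_i$ of $u_1$, so $S(d_f)\subseteq S(c_i)$---together with the left-to-right ordering of siblings, is all that is needed.
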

It is easy to check Facts~\ref{fact:val2} and~\ref{fact:mono2} using the same arguments as in Section~\ref{sec:statcol}.

We store all left values $m_j(u)$, $u\in \pi$, in a set $\cK_1$;
$m_j(u)$ in $\cK_1$ are sorted by the height of~$u$. 
We store all right values $m_j(u)$, $u\in \pi$, in a set $\cK_2$;
$m_j(u)$ in $\cK_2$  are also sorted by the height of $u$. 
Using fusion trees on $\cK_1$, we can find the highest node $u_1$, such that at least one left value $m_g(u_1)> a$. 
We can also find the highest $u_2$  such that 
at least one right value $m_f(u_2)\le b$. 
Since $\cK_1$ and $\cK_2$ contain $O(\log N)$ elements, we can
support searching and updates in $O(1)$ time; see~\cite{FW94,Thorup99}.
By Fact~\ref{fact:val2}, $a< m_g(u_1)\le b$ and $a< m_f(u_2)\le b$. If $u$ is the higher node among $u_1$, $u_2$, then $u$ is an answer to the highest range ancestor query $[a,b]$ for a node $v_l$. 

\no{
\begin{theorem}
   \label{theor:col1ddyn0}
There exists an $O(N)$ space data structure that supports one-dimensional  color range reporting queries in $O(k)$ time and updates in $O(\log^{\eps} U + \log N)$ time, where $U$ is the size of the universe.
\end{theorem}

\paragraph{Dynamic Range Reporting in External Memory.}
We can extend the result of Theorem~\ref{theor:col1ddyn0} to the external memory model.  
\begin{corollary}
\label{cor:dyncol}
There exists an $O(N)$ space data structure that supports one-dimensional  color range reporting queries in $O(K/B)$ I/Os and updates in $O(\log^{\eps} U + \log N)$ I/Os, where $U$ is the size of the universe.
\end{corollary}
The list of colors produced by the data structure in Corollary~\ref{cor:dyncol} may contain a constant number of  occurrences of the same color. We can get rid of duplicates without increasing the query cost if $M> B^{1+\eps}\log_B^{\eps} N$, where $M$ is the number of machine words that fit into   internal memory and $\eps$ is an arbitrary positive constant.
}

\no{
\section{Dynamic  Color Reporting with Fast Updates}
In this section we show that we can reduce the cost of updates without increasing the query time in the dynamic scenario. Our improvement combines an idea from~\cite{M03} with the highest range ancestor approach. We also use a new solution for a special case of two-dimensional point reporting problem. We believe that this result is  of independent  interest.

Let $height(u)$ denote the height of a node $u$.
For an element $e\in S$ let $\hmina(e)=height(u')$, where $u'$ is the highest ancestor of the leaf containing $e$, such that $e\in Min(u')$. We define $\hmaxa(e)$ in the same way with respect to $Max(u)$. All colors in a range  $[a,b]$ can be reported  as follows. We identify 
an arbitrary $e\in S\cap [a,b]$. Using the highest range ancestor data structure, we can find the lowest common ancestor of the leaves that contain $a$ and $b$. Let $u_f$ and $u_g$ be the children of $u$ that contain the successor of $a$ and the predecessor of $b$. We can identify unique colors of relevant  points stored in each node $u_j$, $f<j\le g$, by finding all $e\in [a,b]$ such that $e\in Min(u_j)$.
This is equivalent to reporting all $e\in [a,b]$ such that $\hmina(e)\ge height(u_j)$. We can identify all colors of relevant points in $u_f$ by reporting 
all $e\in [a,b]$ such that $\hmaxa(e)\ge height(u_f)$. While the same color can be reported several times, we can get rid of duplicates as explained in Section~\ref{sec:statcol}.  

When a new point is inserted into $S$ or when a point is deleted from $S$, we 
can update the values of $\hmina(e)$ and $\hmaxa(e)$ in $O(\log\log U)$ time. 
We refer to~\cite{M03,ShiJ05} for details. \no{It remains to show how} 
Queries of the form $e\in [a,b]$, $\hmina(e)\ge c$, (respectively $e\in [a,b]$, $\hmaxa(e)\ge c$) can be supported using Lemma~\ref{lemma:fast}.
}

\section{Fast Queries, Slow Updates}
\label{sec:del}
In this section we describe a dynamic  data structure with optimal  query time. Our improvement combines an idea from~\cite{M03} with the highest range ancestor approach. We also use a new solution for a special case of two-dimensional point reporting problem presented in Section~\ref{sec:fast3sid}. 

Let $height(u)$ denote the height of a node $u$.
For an element $e\in S$ let $\hmina(e)=height(u')$, where $u'$ is the highest ancestor of the leaf containing $e$, such that $e\in Min(u')$. We define $\hmaxa(e)$ in the same way with respect to $Max(u)$. All colors in a range  $[a,b]$ can be reported  as follows. We identify 
an arbitrary $e\in S\cap [a,b]$. Using the highest range ancestor data structure, we can find the lowest common ancestor $u$ of the leaves that contain $a$ and $b$. Let $u_f$ and $u_g$ be the children of $u$ that contain the successor of $a$ and the predecessor of $b$. 
Let $a_f=a$, $b_g=b$; let $a_i=m_i(u)$ for  $f< i \le g$ and $b_i=m_{i+1}(u)-1$ for $f\le i < g$.
We can identify unique colors of relevant  points stored in each node $u_j$, $f<j\le g$, by finding all $e\in [a_j,b_j]$ such that $e\in Min(u_j)$.
This condition is equivalent to reporting all $e\in [a_j,b_j]$ such that $\hmina(e)\ge height(u_j)$. We can identify all colors of relevant points in $u_f$ by reporting 
all $e\in [a_f,b_f]$ such that $\hmaxa(e)\ge height(u_f)$. 
Queries of the form $e\in [a,b]$, $\hmina(e)\ge c$, (respectively $e\in [a,b]$, $\hmaxa(e)\ge c$) can be supported using Lemma~\ref{lemma:fast}.
While the same color can be reported several times, we can get rid of duplicates as explained in Section~\ref{sec:statcol}.  

When a new point is inserted into $S$ or when a point is deleted from $S$, we 
can update the values of $\hmina(e)$ and $\hmaxa(e)$ in $O(\log\log U)$ time. 
We refer to~\cite{M03,ShiJ05} for details. \no{It remains to show how} 

While updates of  data structures of Lemma~\ref{lemma:fast} are fast, re-balancing the base tree can be a problem. As described in Section~\ref{sec:dyncol},   when the number of points in a node $u$ on level $\ell$ exceeds  $2\cdot 8^{\ell}\log N$, we split it into two nodes, $u'$ and $u''$.  As a result, the values $\hmina(e)$ for $e$ stored 
in the leaves of $u''$ can be incremented.  Hence, we would have to examine 
the leaf descendants of $u''$ and recompute their values for some of them. Since the 
height of $\cT$ is logarithmic, the total cost incurred by re-computing the 
values $\hmina(e)$ and $\hmaxa(e)$ is $O(\log N)$.
The problem of reducing the cost of re-building the tree will be solved in the following sections.
In Appendix~\ref{sec:slow} we describe another  data structure that 
supports fast updates but answering queries takes polynomial time in the worst case. 
In Section~\ref{sec:fullydyn} we show how the cost of splitting can be reduced by modifying the definition 
of $\hmina(e)$, $\hmaxa(e)$ and using the slow data structure from Appendix~\ref{sec:slow} when the number of reported colors is sufficiently large.

\section{Fast Queries, Fast Updates}
\label{sec:fullydyn}
Let $n(u)$ denote the number of leaves in the subtree of a node $u$.
Let $Left(u)$ denote the set of $(n(u))^{1/2}$ smallest elements in 
$Min(u)$; let $Right(u)$ denote the set of $(n(u))^{1/2}$ largest elements in $Max(u)$.   
We  maintain the values $\ohmin(e)$ and $\ohmax(e)$ for 
$e\in S$, such that for any $u\in \cT$ we have: $\ohmin(e)=\hmina(e)$ if $e\in Left(u)$ and $\ohmin(e)\le \hmina(e)$ if $e\in S(u)\setminus Left(u)$; 
$\ohmax(e)=\hmaxa(e)$ if $e\in Right(u)$ and $\ohmax(e)\le \hmaxa(e)$ 
if $e\in S(u)\setminus Right(u)$.  
We keep $\ohmin(e)$ and $\ohmax(e)$ in data structures of Lemma~\ref{lemma:fast}. We maintain the data structure described in Section~\ref{sec:slow}. This data structure is used to answer queries when the number of colors in the query range is large. It is  also used to update the values of $\ohmin(e)$ and $\ohmax(e)$ when a node is split.  

To answer a query $[a,b]$, we proceed in the same way as in Section~\ref{sec:del}. 
Let $u$, $u_f$, $u_g$, and $a_i$, $b_i$, $f\le i \le g$ be defined as in Section~\ref{sec:del}. 
\no{For a query range $[a,b]$, we find the lowest common ancestor $u$ 
of the leaves $l_a$ and $l_b$ that 
contain the successor of $a$ and the predecessor of $b$ respectively. 
Let $u_f$ and $u_g$ be the children of $u$ that contain $l_a$ and $l_b$ respectively. We report all $e\in [a,b]$, such that $\ohmin(e)\ge height(u_j)$ 
and all $e\in [a,b]$ such that $\ohmax(e)\ge height(u_j)$.
}
Distinct colors in each $[a_i,b_i]$, $f\le i \le g$,  can be reported using the data structure of Lemma~\ref{lemma:fast}. If the answer to at least one of the queries contains at least $(n(u_i))^{1/2}$ elements, then there are 
at least $(n(u_i))^{1/2}$ different colors in $[a,b]$. 
The total number of elements in $[a,b]\cap S$ does not exceed $n(u)=16n(u_j)$. 
Hence, we can employ the data structure from Section~\ref{sec:slow} to 
report all colors from $[a,b]$ in $O(([a,b]\cap S)^{1/2}+k)=O(k)$ time. 
If answers to all queries contain less than $(n(u_i))^{1/2}$ elements, then for every distinct color that occurs in 
$[a,b]$ there is an element $e$ such that $e\in Left(u_i)\cap [a_i,b_i]$, $f\le i <g$, or $[e\in Right(u_g)\cap [a_g,b_g]$. 
By definition of $\ohmin$ and $\ohmax$ we can correctly 
report up to $(n(u_i))^{1/2}$ leftmost colors in $Left(u_i)$ or up to $(n(u_i))^{1/2}$ rightmost colors in $Right(u_i)$.   

\tolerance=500
When a new element $e$ is inserted, we compute the values of 
$\hmina(e)$, $\hmaxa(e)$ and update the values of $\hmina(e_n)$, $\hmaxa(e_n)$, where $e_n$ is the element of the same color as $e$ that follows $e$. This can be done in the same way as in Section~\ref{sec:del}.  When a node $u$ on level $\ell$ is split 
into $u'$ and $u''$, we update the values of $\hmina(e)$ and
$\hmaxa(e)$ for $e\in S(u')\cup S(u'')$. If $\ell \le \log \log N$, we examine all $e\in S(u')\cup S(u'')$ and re-compute the values of $prev(e)$, $\hmina(e)$, and $\hmaxa(e)$. 
Amortized cost of re-building nodes $u$ on $\log\log N$ lowest tree levels is $O(\log \log N)$. 
If $\ell> \log\log N$, $S(u)$ contains $\Omega(\log^5N)$ elements.  We can find $(n(u'))^{1/2}$ elements in 
$Left(u')$, $Left(u'')$, $Right(u')$, and  $Right(u'')$ using 
the data structure from Lemma~\ref{lemma:slowcount}. This takes 
$O((n(u)^{1/2})\log N +\log N\log \log N)=O(((n(u))^{7/10})$ time. 
Since we split a node $u$ one time after $\Theta(n(u))$ insertions, the amortized cost of splitting nodes on level 
$\ell>\log \log N$ is $O(1)$. 
Thus the total cost incurred by splitting nodes after  insertions is $O(\log \log N)$.
Deletions are processed in a symmetric way. 
Thus we obtain the following result
\begin{theorem}
\label{theor:dyn}
  There exists a linear space data structure that supports 
one-dimensional color range reporting queries in $O(k+1)$ time and updates in $O(\log^{\eps}U)$ amortized time. 
\end{theorem}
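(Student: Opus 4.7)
The plan is to verify the three promised bounds for the construction of Section~\ref{sec:fullydyn}. Space is immediate: the weight-balanced tree $\cT$ has $O(N/\log^2 N)$ internal nodes with constant-size auxiliary catalogues, every leaf stores $\Theta(\log^2 N)$ elements in its local reporter $D(v_l)$, and each element contributes one $\ohmin$ and one $\ohmax$ entry to the structure of Lemma~\ref{lemma:fast}. The auxiliary one-reporting structure of~\cite{MPP05} and the slow reporter of Section~\ref{sec:slow} are also linear, so the total is $O(N)$.

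For query time I would argue correctness and time together, by formalising the two regimes sketched above. After finding the lowest common ancestor $u$ via the highest range ancestor tool and decomposing $[a,b]$ into the constantly many aligned sub-ranges $[a_i,b_i]$, each sub-range is answered by a call to Lemma~\ref{lemma:fast} that reports $e\in[a_i,b_i]$ with $\ohmin(e)\ge height(u_i)$ (and the symmetric $\ohmax$ query for $u_f$). If every sub-query returns fewer than $(n(u_i))^{1/2}$ elements, the invariant on $\ohmin,\ohmax$ guarantees $\ohmin(e)=\hmina(e)$ on each reported element, so every color of $[a,b]$ is witnessed by at least one $e\in Min(u_j)$ or $e\in Max(u_f)$ and is therefore reported; duplicates are stripped exactly as in Section~\ref{sec:statcol}. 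Otherwise some sub-query saturates its threshold, which already certifies $k=\Omega((n(u_i))^{1/2})=\Omega(|S\cap[a,b]|^{1/2})$, and a direct call to the slow reporter of Section~\ref{sec:slow} finishes in $O(|S\cap[a,b]|^{1/2}+k)=O(k)$.

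For updates, a single insertion or deletion of $e$ triggers: an update to the one-reporting structure of~\cite{MPP05} in $O(\log^{\eps}U)$ time; updates to $\hmina(e),\hmaxa(e)$ and to the adjacent same-colored element in $O(\log\log U)$ time via~\cite{M03,ShiJ05}; and constant-time updates to $D(v_l)$, $F(v_l)$, and the fusion-tree catalogues $\cK_1,\cK_2$ on the insertion path. The delicate step is rebalancing. When a node $u$ on level $\ell$ is split, every $\hmina(e)$ in its subtree may jump upward, so I split into cases. For $\ell\le\log\log N$ the subtree holds only $O(\log^5 N)$ elements, so each leaf's $\hmina,\hmaxa,\ohmin,\ohmax$ can be rewritten directly; since a level-$\ell$ split happens once per $\Omega(8^\ell\log^2 N)$ insertions, this amortises to $O(\log\log N)$ per insertion over the $\log\log N$ affected levels. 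For $\ell>\log\log N$, the subtree size is $\Omega(\log^5 N)$ and I use Lemma~\ref{lemma:slowcount} to recompute only the two new boundary sets $Left,Right$ of size $(n(u'))^{1/2}$ in $O(n(u)^{7/10})$ time, which is $o(n(u))$ and hence amortises to $O(1)$ per insertion. Summing all contributions, the amortised cost is dominated by the $O(\log^{\eps}U)$ term of~\cite{MPP05}.

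The main obstacle is exactly this split step: maintaining $\ohmin=\hmina$ for every element under splits would force touching $\Theta(n(u))$ descendants and destroy the amortisation. The decisive idea is to relax the invariant so that equality is required only on the boundary sets $Left(u),Right(u)$ of size $(n(u))^{1/2}$, since that is precisely what regime (i) of the query analysis consumes, while regime (ii) safely hands off to the slow reporter. With this relaxation, Lemma~\ref{lemma:slowcount} recomputes just enough information in sub-linear time and the weight-balanced amortisation of~\cite{AV03} closes the argument.
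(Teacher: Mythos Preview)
Your proposal is correct and follows essentially the same route as the paper: the two-regime query analysis (sub-threshold via the $\ohmin/\ohmax$ invariant on $Left(u_i),Right(u_i)$, over-threshold via the slow reporter of Lemma~\ref{lemma:slowrep}), and the level-based case split for rebalancing (brute-force rebuild on levels $\ell\le\log\log N$, Lemma~\ref{lemma:slowcount} on higher levels) with amortisation through the weight-balanced tree of~\cite{AV03}. Your explicit space accounting and the closing paragraph identifying the relaxed invariant as the crux are a nice addition that the paper leaves implicit.
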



\setcounter{section}{0}
\renewcommand\thesection{A.\arabic{section}}

\section{Dynamic Three-Sided Queries on a Narrow Stripe}\label{sec:fast3sid}
In this section we describe a data structure that efficiently supports (uncolored) two-dimensional queries in the case when the $y$-coordinate $p.y$ of each point $p\in S$ does not exceed  $\log N$ and the query range is bounded on three sides, two sides in the $x$-dimension and one side in the $y$-dimension; the $x$-coordinate $p.x$ of each $p\in S$  does not exceed the value of parameter $U$. This result is  used in the
dynamic solution of the color range reporting problem. We also believe it to be of independent interest.
\begin{lemma}\label{lemma:fast}
Let $S$ be a set of two-dimensional points such that $p.y\le \log N$ for all $p\in S$. There exists a linear-space data structure that reports all points 
$p\in [a,b]\times [c,\log N]$ in $O(k+1)$ time and supports updates in $O(\log^{\eps}U)$ time, where $U$ is the size of the universe. 
\end{lemma}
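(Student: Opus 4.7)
The plan is to separate the two coordinates by maintaining one dynamic $1$D reporting structure per $y$-level, and then to add a word-packed ``level index'' that exploits the bound $y\le \log N$. For each $j\in\{0,\ldots,\log N\}$, I would maintain a dynamic $O(k+1)$-time $1$D point-reporting structure $R_j$ of~\cite{MPP05} on the $x$-coordinates of the points whose $y$-coordinate equals $j$. Each $R_j$ supports one-reporting in $O(1)$, full range reporting in $O(k_j+1)$ time, and updates in $O(\log^{\eps}U)$ time; since every point lives in exactly one $R_j$, the total space across all levels is $O(N)$.

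On top of the $R_j$'s I would build a weight-balanced B-tree $\cT$ on the $x$-coordinates of $S$ (exactly as in Section~\ref{sec:dyncol}) and store at every node $u$ a bit-vector $B(u)\in\{0,1\}^{\log N}$ whose $j$-th bit is $1$ iff the subtree of $u$ contains a point of $y$-level $j$. Because $|B(u)|=\log N$ bits fits in $O(1)$ machine words, OR, AND-with-a-mask and find-lowest-set-bit on $B(u)$ all take $O(1)$ time. Given a query $[a,b]\times[c,\log N]$, I would first apply the highest range ancestor machinery of Sections~\ref{sec:statcol}--\ref{sec:dyncol} to locate, in $O(1)$ time, the LCA $u$ of the two boundary leaves $v_a,v_b$ together with the two children $u_f,u_g$ of $u$ through which the boundary paths descend. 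Combining the $B$-vectors along the two boundary paths (see below) produces a bit-vector $M$ whose set bits are exactly the $y$-levels that occur somewhere in $S\cap([a,b]\times\{0,\ldots,\log N\})$; masking with $[c,\log N]$ restricts $M$ to the levels we care about. Then I loop: pick the lowest set bit $j$ of $M$, query $R_j$ to output all its points in $[a,b]$ (at least one, by construction of $M$), and clear bit $j$ of $M$. Each iteration emits $\ge 1$ output point, so the overall query time is $O(k+1)$.

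An insertion or deletion touches exactly one $R_j$ (at cost $O(\log^{\eps}U)$), flips at most one bit in $B(u)$ at each of the $O(\log n)$ ancestors of the affected leaf, and refreshes the highest-range-ancestor indices in the same way as Section~\ref{sec:dyncol}. The technical crux, and the main obstacle I expect, is assembling $M$ in $O(1)$ time without a $\Theta(\log n)$ canonical decomposition of $[a,b]$. My plan is to precompute, at each leaf $v_l$ and each ancestor $u$ of $v_l$, the OR of $B(u')$ over the siblings of the $v_l$-to-root path that lie strictly inside $u$'s subtree on the left (resp.\ right) of that path; these ``path-prefix ORs'' compose so that the OR over all subtrees strictly inside $[a,b]$ is exactly the right-siblings prefix on $v_a$'s path up to $u$ OR-ed with the left-siblings prefix on $v_b$'s path up to $u$, retrievable in $O(1)$. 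With leaves of size $\Theta(\log^{2}N)$ and root-paths of length $O(\log n)$, the total storage for these prefix ORs is $O(N/\log N)$, and the weight-balanced rebalancing of $\cT$ amortizes their maintenance at $O(1)$ per update; the two ``partial'' leaves $v_a,v_b$ and any extra points inside $u_f,u_g$ but outside the canonical interior are handled with a small precomputed table per leaf, since each leaf holds only $O(\log^{2}N)$ points. Putting these ingredients together yields Lemma~\ref{lemma:fast}.
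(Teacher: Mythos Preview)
Your scheme is quite different from the paper's and, as written, has two gaps that I do not see how to close within the claimed bounds.

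\textbf{You cannot reach $v_a$ and $v_b$ in $O(1)$ time.} Your prefix-OR tables are keyed by the boundary leaves $v_a,v_b$, but the highest-range-ancestor machinery of Sections~\ref{sec:statcol}--\ref{sec:dyncol} never produces those leaves. It starts from an \emph{arbitrary} leaf $v_e$ with $S(v_e)\cap[a,b]\neq\emptyset$ (found via one-reporting) and returns only the LCA $u$ together with the children $u_f,u_g$; this is precisely the content of Fact~\ref{fact:eluc}. Turning $u_f$ into the actual leaf $v_a$ is a predecessor query and costs $\Omega(\tpred(N))$ in linear space, so the two stored prefix ORs cannot be read in $O(1)$ time. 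If instead you use the prefix ORs stored at the one leaf $v_e$ you do have, the sibling subtrees you OR over spill past $a$ on the left and past $b$ on the right, so $M$ becomes an over-approximation with up to $\Theta(\log N)$ false-positive levels, and the loop over set bits of $M$ no longer runs in $O(k+1)$.

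\textbf{Leaf-keyed prefix ORs are not cheap to maintain.} An insertion into some leaf $v'$ may flip a bit of $B(w)$ for an ancestor $w$ of $v'$; every leaf whose root path has $w$ as a \emph{sibling} (there can be $\Theta(n(w))$ of them) stores a prefix OR containing $B(w)$ as a summand and must be refreshed. This is global propagation, not subtree-local rebuilding, so the weight-balanced amortization of Section~\ref{sec:dyncol} does not apply: that argument pays for reconstructing a node's own secondary structures after a split, not for cascading changes into data stored at unrelated leaves. The same objection hits the ``small precomputed table per leaf'' for the two partial leaves: maintaining all prefix level-bitmaps of a $\Theta(\log^{2}N)$-element leaf costs $\Theta(\log^{2}N)$ per update.

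For contrast, the paper avoids a tree-with-bitvector decomposition altogether. It partitions $S$ into $x$-consecutive groups $G_i$ of $\Theta(\log N)$ points and maintains values $\gmin(i,h),\gmax(i,h)$ via a base-$\tau$ trick (with $\tau=\lceil\log^{\eps}N\rceil$): an update touches only $O(\log^{\eps}N)$ such values, yet for any query threshold $c$ one can name $O(1/\eps)$ indices $f_0,\dots,f_g$ such that $\hmin(i,c)$ and $\hmax(i,c)$ always appear among the $\gmin(i,f_j),\gmax(i,f_j)$. A constant number of one-dimensional reporting queries on global structures $R_{f_j}$ then identify every group meeting $[a,b]\times[c,\log N]$, each hit certifying at least one output point, and a local three-sided structure $H_i$ inside the group finishes the job.
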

\begin{proof}
We divide the points into consecutive groups $G_i$ according to their $x$-coordinates. Each group, except  the last one,  contains $\Theta(\log N)$ points; the last group contains $O(\log N)$ points. For every group $G_i$ and for each value $h\le \log N$, we keep two values $\gmin(i,h)$ and $\gmax(i,h)$. 
Intuitively, we can access the value of $\hmax(i,h)=\max\{\,p.x\,|\, p\in G_i,\, p.y\le h\,\}$ and $\hmin(i,h)=\min\{\,p.x\,|\, p\in G_i,\, p.y\le h\,\}$ for any $h$ and $i$ by examining $O(1)$ values $\gmin(i,h)$ or $\gmax(i,h)$.
We remark that we cannot directly store the values of $\max\{\,p.x\,|\, p\in G_i,\, p.y\le h\,\}$ in $\gmax(i,h)$ because an update operation would be too costly. 

Let $\tau=\ceil{\log^{\eps}N}$.
When a new point $p$ is inserted, we represent $h=p.y$ in base $\tau$ as a sum of powers\footnote{To avoid tedious details, we assume that $\log N$ is a power of $\tau$.}  of 
$\tau$: $h=a_0+a_1\tau+\ldots+a_g\tau^g$ where $0\le a_i<\tau$ and $g\le \floor{1/\eps}$.  
Let  $h_{r,s}$ for $r=g,g-1,\ldots,0$ and $s=1,2,\ldots, a_r$  be defined as $h_{r,s}= \sum_{j=r+1}^g a_j\tau^j+s\cdot \tau^r$. Let $G_i$ denote the group into which $p$ must be inserted. We examine the values $\gmin(i,h_{r,s})$  and $\gmax(i,h_{r,s})$ for all $r=g,g-1,\ldots,0$ and $s=1,2,\ldots, a_r$. If $\gmin(i,h_{r,s})>p.x$ or $\gmax(i,h_{r,s})< p.x$, we update their values.  The deletion procedure is symmetric. In both cases, we examine and update $O(\log^{\eps}N)$ values.

\begin{fact}\label{fact:simp}
Suppose that some $G_j$ contains at least one point $p$ such that $p.x\in [a,b]$ and 
$p.y\ge h$. Suppose further that $\max\{\,p.x\,|\, p\in G_i\,\}\le b$ or $\min\{\,p.x\,|\, p\in G_i\,\}\ge a$. 
Then either $\hmin(i,h)\in [a,b]$ or $\hmax(i,h)\in [a,b]$. 
\end{fact}

\begin{fact}\label{fact:sel}
For any $h$, we can select $O(1)$ values $f_0,f_1,\ldots, f_g$, such that 
for any $i$  there is at least one $j$, $1\le j\le g$, satisfying $\hmin(i,h)=\gmin(i,f_j)$ and at least one $l$, $1\le l\le g$, satisfying $\hmax(i,h)=\gmax(i,f_l)$.
\end{fact}
\begin{proof}
Let $h= a_g\tau^g+\ldots+a_1\tau+a_0$  where $0\le a_i<\tau$ . 
We set $f_0=h$, $f_1=\sum_{s=2}^g a_s\tau^s+(a_1+1)\tau$, 
$\ldots$, $f_v=\sum_{s=v+1}^{g} a_s\tau^s +(a_v+1)\tau^v$, $\ldots$, 
$f_g=(a_g+1)\tau^g$.
If $\hmin(i,h)=a$, then there is a point $p$, such that $p.y=h'\ge h$ and 
$p.x=a$. Let $h'= a'_g\tau^g+\ldots+a'_1\tau+a_0$. If $h'\not=h$, we consider an index $t$ such that 
$a'_t>a_t$ and $a_r=a'_r$ for $r<t$.  
By definition of $\gmin(\cdot,\cdot)$, \no{we set the  value $\gmin(i,h'_{t,a_t+1})=p.x$ when 
$p$ was inserted. Here}
$\gmin(i,h'_{t,a_t+1})=p.x$, where  $h'_{t,a_t+1}=\sum_{s=t+1}^{g} a_s\tau^s +(a_t+1)\tau^t$. 
By our choice of $f_i$,  $h'_{t,a_t+1}=f_t$. Hence, $\gmin(i,f_t)=\hmin(i,h)$. 
If $h'=h$, then $\gmin(i,h)=p.x$ and $\gmin(i,f_0)=\hmin(i,h)$.
The statement concerning $\hmax(i,h)$ can be proved in the same way.
\end{proof}

Facts~\ref{fact:simp} and~\ref{fact:sel} suggest the following method for answering queries.  
For any $h$, $1\le h\le \log N$, we store all $\gmin(i,h)$ and $\gmax(i,h)$ in a data 
structure $R_h$ that supports one-dimensional point reporting queries.  Data structure 
$\oR$ contains $x$-coordinates of all points in $S$ and also supports one-dimensional reporting queries. Using the result from~\cite{MPP05},  $\oR$ and all $R_h$ support queries 
in $O(k+1)$ time and updates in $O(\log^{\eps}N)$ time.  For each $G_i$, we maintain a data structure $H_i$. For any $a\le b$ and $1\le c\le \log N$, $H_i$ can report all points $p\in G_i$, $a\le p.x\le b$ and $c\le p.y\le \log N$. $H_i$ uses $O(|G_i|)$ space and supports updates in $O(1)$ time. We can implement $G_i$ in the same way as the data structure 
$D(v_l)$ in Section~\ref{sec:statcol}.  

Given a query $Q=[a,b]\times [c,\log N]$, we use $\oR$ for identifying an arbitrary $p_0\in S$, 
such that $p_0.x\in [a,b]$. Let $G_0$ be the group that contains $p_0$. If all points 
$p$ such that $p.x\in [a,b]$ are contained in $G_0$, we use the data structure $H_0$ to answer 
the query. If $[a,b]$ spans several groups, then we generate the values $f_0,\ldots,f_g$ as in the proof of Fact~\ref{fact:sel}. We query data structures $R_{f_j}$ and identify all 
values $\gmin(i,f_j)\in [a,b]$ and $\gmax(i,f_j)\in [a,b]$.  For every such $\gmin(i,f_j)$ and $\gmax(i,f_j)$, we visit the corresponding group $G_i$ and answer the query $[a,b]\times [c,\log N]$ on $G_i$ using $H_i$. 

By Facts~\ref{fact:simp} and~\ref{fact:sel},  queries to $R_{f_j}$ return at least  one representative element from every $G_i$ such that $G_i\cap Q\not=\emptyset$ and only such groups will be visited. 
Since we ask $g+1$ queries to $R_{f_j}$, every group $G_i$ is visted at most $g+1=O(1)$ times. 

When a new point $p$  is inserted into $S$, we identify the group $G_i$ where it belongs  in $O(\log\log N)$ time. Then we re-examine the values of $\gmin(i,h_{r,s})$  and $\gmax(i,h_{r,s})$ for $h=p.y$. If necessary, we update the data structures 
$R_{h_{r,s}}$. 
When the number of elements in some $G_i$ becomes equal to $2\log N$, we split $G_i$ into two groups in a standard way.  Deletions are symmetric. 
Thus updates are supported in $O((\log^{\eps}N)\log^{\eps}U)$ time and queries can be answered in 
$O(1)$ time. We obtain the result of Lemma~\ref{lemma:fast} if we replace $\eps$ by $\eps/2$ into the above proof.
\end{proof}

\no{
We can extend the result of Lemma~\ref{lemma:fast} to the external memory model.
\begin{lemma}\label{lemma:extfast}
Let $S$ be a set of two-dimensional points such that $p.y\le \log N$ for all $p\in S$. There exists a linear-space data structure that reports all points 
$p\in [a,b]\times [c,\log N]$ in $O(k/B+1)$ I/Os and supports updates in $O(\log^{\eps}U)$ I/Os. 
\end{lemma}
\begin{proof}
  As in Lemma~\ref{lemma:fast} we divide points into groups $G_i$. Each group, except the last one, contains $\Theta(B\log N)$ points; the last group contains $O(B\log N)$ points. We define $\gmin(i,h)$, $\gmax(i,h)$, and  
$\oR$ in the same way as in Lemma~\ref{lemma:fast}. 
\end{proof}
}

\no{
Using Lemma~\ref{lemma:fast}, we can find all elements $e\in [a,b]$ such that $\hmina(e)\ge height(u_j)$ or $\hmaxa(e)\ge height(u_j)$. We thus obtain the following result.
\begin{theorem}\label{theor:fast}
There exists a linear space data structure that answers one-dimensional color reporting queries in $O(k+1)$ time and updates in $O(\log^{\eps} U)$ time. 
\end{theorem}
We observe that  the structure of Theorem~\ref{theor:fast} cannot be extended to the external memory. The reason is that we have to visit  a number of groups $G_i$ when 
queries to structures of Lemma~\ref{lemma:fast} are processed. We would have to spend $O(1)$ I/Os in each group; therefore obtaining $O(1+k/B)$ query cost with the method 
of Theorem~\ref{theor:fast} is problematic. 
}

\section{Slow Queries, Fast Updates}
\label{sec:slow}
In this section we describe a linear-space data structure that supports color reporting queries in $O(n_{a,b}^{1/2} + \log \log n + k)$ time, where $n_{a,b}=|S\cap [a,b]|$ is the number of elements in the query range $[a,b]$. Although  the query cost is high for large $n_{a,b}$, updates  are supported in $O(\log^{\eps} U)$ time, where $U$ is the size of the universe.

\begin{lemma}
\label{lemma:slowrep}
There exists a linear space data structure that reports all distinct  colors in a query range  $[a,b]$. 
Queries are supported in $O(n_{a,b}^{1/2}+\log\log n +k)$ time where $n_{a,b}=|[a,b]\cap S|$ is the number of elements in the query range $[a,b]$. Updates are supported in $O(\log^{\eps}U)$ time.  
\end{lemma}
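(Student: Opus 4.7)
The plan is to reduce the one-dimensional color reporting problem to a three-sided uncolored reporting problem, and then to build a dedicated dynamic data structure for that reformulation. For each $e\in S$ recall that $prev(e)$ denotes the largest $e'\le e$ of the same color (taken to be $-\infty$ if no such $e'$ exists). A color $c$ appears in $[a,b]$ if and only if some $e\in S\cap[a,b]$ satisfies $prev(e)<a$, and this $e$ is uniquely the leftmost occurrence of $c$ in $[a,b]$. Thus the color reporting problem is equivalent to reporting the points of $\{(e,prev(e)):e\in S\}$ that lie inside the three-sided rectangle $[a,b]\times(-\infty,a-1]$, with output size exactly $k$. Under insertions and deletions I would maintain the $prev$-pointers in $O(\log\log U)$ per update by keeping each color class in a van Emde Boas predecessor structure.

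For the three-sided structure itself I would use a weight-balanced B-tree $\cT$ over $S$ with branching $\Theta(1)$ and small leaves, augmented with the highest-range-ancestor machinery of Section~\ref{sec:dyncol} and the dynamic one-reporting structure of~\cite{MPP05} so that the lowest common ancestor $u$ of the two leaves straddled by $[a,b]$ can be located in $O(\log\log U)$ time, giving $n(u)=\Theta(n_{a,b})$. At each internal $v\in\cT$ I would store the sorted per-color-minimum list $Min(v)$, each entry annotated with its $prev$-value, and maintain these lists under updates via~\cite{MPP05}. A propagation argument analogous to that used for the lists $L(u),R(u)$ in Section~\ref{sec:statcol} shows that only $O(1)$ amortized $Min$-entries change per update along the $O(\log n)$ levels of $\cT$, giving total amortized update time $O(\log^{\eps}U)$.

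To answer a query $[a,b]$ I would descend from $u$ to the level $\ell^*$ whose subtrees have size $\Theta(\sqrt{n(u)})$; at this level at most $O(\sqrt{n_{a,b}})$ subtrees intersect $[a,b]$. For each inner subtree $v$ that lies fully inside $[a,b]$ I would report the elements of $Min(v)\cap[a,b]$ whose $prev$-value is below $a$; by the reduction above, this exactly enumerates those colors of $[a,b]$ whose leftmost occurrence lies in $S(v)$, with no duplicates. The two boundary subtrees at level $\ell^*$ are handled recursively by the same scheme, giving the recurrence $T(m)=2T(\sqrt{m})+O(\sqrt{m}+k_v)$, which resolves to $O(\sqrt{n_{a,b}}+k)$; combined with the $O(\log\log U)$ entry cost this yields the claimed query bound.

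The main obstacle is keeping the per-node inspection cost within the $\sqrt{n_{a,b}}$ budget, since naive iteration over $Min(v)\cap[a,b]$ at every inner $v$ could cost $\Theta(n(u))$ in the worst case. My plan is to equip each $Min(v)$ with an auxiliary three-sided data structure in the style of Lemma~\ref{lemma:fast} that reports only those $e\in Min(v)\cap[a,b]$ satisfying $prev(e)<a$ in output-sensitive time, so that the work at $v$ is $O(1)$ plus the number of colors actually contributed there. Carefully verifying that these secondary structures can be updated in $O(\log^{\eps}U)$ amortized, despite the fact that $prev$-values of already-present elements can change when nearby same-color points are inserted or deleted, is the heart of the argument; I would handle it by observing that a single update alters $prev$ for $O(1)$ existing elements and that the weight-balanced rebalancing of $\cT$ only splits a node after $\Omega(n(v))$ operations, which amortizes the cost of rebuilding the auxiliary lists to $O(1)$ per element moved.
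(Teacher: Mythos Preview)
Your reduction to three-sided reporting via $prev(\cdot)$ is the same as the paper's, but the supporting data structure has several genuine gaps.

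First, the claim ``giving $n(u)=\Theta(n_{a,b})$'' is false. If $a$ and $b$ are adjacent elements straddling the midpoint of the root, the LCA $u$ is the root with $n(u)=n$ while $n_{a,b}=2$. Your recurrence $T(m)=2T(\sqrt{m})+O(\sqrt{m})$ therefore solves to $O(\sqrt{n(u)})$, not $O(\sqrt{n_{a,b}})$. (A more delicate analysis of how many level-$\ell^*$ subtrees can intersect $[a,b]$ does recover the right bound, but it is not the argument you gave.)

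Second, storing the full list $Min(v)$ at \emph{every} internal node of a constant-degree tree is $\Theta(n\log n)$ space in the worst case: if all colors are distinct then $Min(v)=S(v)$ and $\sum_v|S(v)|=\Theta(n\log n)$. Relatedly, your update claim fails: inserting an element $e$ that becomes the leftmost of its color in $S(v)$ for every ancestor $v$ modifies $\Theta(\log n)$ lists $Min(v)$, not $O(1)$; the analogy with the bounded-length lists $L(u),R(u)$ of Section~\ref{sec:statcol} does not carry over, since $|Min(v)|$ is unbounded.

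The paper sidesteps all three issues by \emph{building the square-root recursion into the tree itself}: the root has $\Theta(\sqrt{n})$ children, each of those has $\Theta(n^{1/4})$ children, and so on, so the height is $O(\log\log n)$. Each element $e$ is stored in a single set $C(u)$, namely at the unique highest ancestor $u$ with $prev(e)\notin S(u)$, which makes the total space linear and an update touch only $O(1)$ sets $C(\cdot)$. A query walks the path $\pi_a$ from the LCA $v_q$ down to the leaf of $a$ (length $O(\log\log n)$), visits the right siblings along that path (their count telescopes to $O(\sqrt{n_{a,b}})$ because the fanouts are $n^{1/2},n^{1/4},\ldots$), and queries the $O(\log\log n)$ proper ancestors of $v_q$ with a one-dimensional reporting structure. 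In each visited node only a sorted list $P(u)$ (by $prev$-value) or $V(u)$ (by value) is scanned, so no per-node three-sided structure is needed. Your on-the-fly descent on a constant-degree tree is trying to simulate exactly this decomposition, but paying $\log n$ instead of $\log\log n$ in height is what blows up both space and updates.
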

\begin{proof}
Let $T$ be a range tree of the set $S$. Leaves of $T$ contain the elements of $S$ in sorted order. The root of $T$ has $\Theta(n^{1/2})$ children. 
Each child of the root node has  $\Theta(n^{1/4})$ children and $\Theta(n^{1/2})$ leaf descendants. 
A node of depth $d$ has $\Theta(n^{g_d})$ children and $\Theta(n^{g_{d-1}})$ leaf descendants where $g_i=(1/2)^{i+1}$. 
Thus the height of $T$ is $O(\log\log n)$. 
As before, $S(u)$ denotes the set of elements stored in leaf descendants of 
$u$.
Recall that $prev(e)$ for $e\in S$  denotes the largest element $e'\le e$  of the same color. The set $C(u)$, $u\in \cT$, contains all elements $e\in S(u)$, such that $prev(e)\not\in S(u)$. 
We maintain a balanced tree $\cT$ and  sets $C(u)$ in all nodes 
$u\in\cT$. Further, all elements of every $C(u)$  are kept in two sorted lists, $P(u)$ and $V(u)$. Elements $e\in P(u)$ are sorted by $prev(e)$; 
elements in $V(u)$ are sorted by their values. Finally, we also maintain 
a data structure, described in~\cite{MPP05}, that supports reporting queries
on $C(u)$ in $O(k+1)$ time and updates in $O(\log^{\eps}U)$ time.  

To answer a query $Q=[a,b]$, we identify the leaves $v_a$ and $v_b$ that 
hold $e_a$ and $e_b$ respectively, where $e_a$ is the smallest element that is greater than $a$ and $e_b$ is the largest element that is smaller than $b$. 
Let $v_q$ be the lowest common ancestor of $v_a$ and $v_b$. Suppose that 
$[a,b]$ covers children $v_{l+1}$, $\ldots$, $v_{r-1}$ of $v_q$ and intersects 
with $S(v_l)$ and $S(v_r)$.  Let $\pi_a$ denote the path from $v_q$ to 
$v_a$. The query answering procedure works as follows. 
\begin{inparaenum}[(i)] 
\item \label{list:slowstep1} We visit all $u\in \pi_a$ and report all elements $e\in C(u)$, 
$e\ge a$, in each $u$  using $V(u)$. 
\item \label{list:slowstep2}
Then we visit all right siblings $u'$ of nodes $u\in \pi_a$ except  $v_l$; 
in every $u'$, we report colors of $e\in C(u')$, $prev(e)< a$, 
using $P(u')$. 
We also report all colors of $e\in C(v_{l+1})\cup\ldots\cup C(v_{r-1}$), 
$prev(e)<a$. 
\item \label{list:slowstep3}
We also report colors of all $e\in C(v_r)$, $e\le b$, using  
$V(v_r)$. 
\item \label{list:slowstep4}
Finally we visit all proper ancestors $w$ of $v_q$; in every 
$w$ we report all elements $e\in C(w)$, $a\le e\le b$, using the reporting 
data structure.  
\end{inparaenum}

Correctness of our procedure can be demonstrated as follows. Suppose that 
$e$ is the leftmost occurrence of some color in $[a,b]$. We consider two different cases. (1) $prev(e)\not\in S(v_q)$. Then $e\in C(w)$ for some ancestor 
$w$ of $v_q$ and it will be reported when $C(w)$ is queried. 
(2) $prev(e)\in S(u_l)$ where $u_l$ is the left sibling of some node $u$ 
on $\pi_a$. Then $e$ is stored in $C(u)$ or in $C(u')$ for a right sibling $u'$ of $u$. Hence $e$ was reported when $u$ (resp.\ $u'$) was visited. 
Each color is reported at most two times. If the  color of an element $e_c$ is reported during step (ii), then  $e_c$ is the leftmost element
of that color in $[a,b]$ because we only output elements $e$ such that 
$prev(e)<a$. If the color of an element $e_c$ is reported during step (i) or step (iv), than $e_c$ is likewise the leftmost element of that color. This is because $e_c\in C(w)$ (resp. $e\in C(u)$) and the leftmost occurrence of a color in $S(w)$ (or $S(u)$) is also the leftmost occurrence in $[a,b]$. 
The only situation when we report the color of an element $e_c$ and $e_c$ is not the leftmost occurrence of that color is during  step~(\ref{list:slowstep3}). We can get rid of duplicates 
by traversing the list of answers and removing elements $e$, such that $prev(e)\ge a$.

The time needed to answer a query can be estimated by counting the number 
of visited nodes.   Let $u_h$ denote the highest node on $\pi_a$ such that 
at least one sibling $u$ of $u_h$ is visited. The number of leaves in the subtree of $u$ is $O(n^{g_d})$, where $d$ is the depth of $u$. We consider all nodes $u_t$ below $u_h$ on $\pi_a$, the total number of siblings of such $u_t$ is bounded by $O(\sum_{i=d+1}^{O(\log\log n)}n^{g_i})=O(n^{g_{d+1}})=O((n^{g_d})^{1/2})$.  
Thus $q(n)=O(n_{a,b}^{1/2}+\log \log n)$, where $q(n)$ is the time needed to answer a query and $n_{a,b}=|S\cap [a,b]|$ is the total number of elements in $[a,b]$.

When an element $e$ is inserted into $S$, we identify the greatest  $e_p\le e$ and the smallest $e_n\ge e$ such that $e_p$ and $e_n$ are  of the same color as $e$. 
We insert $e$ into an appropriate leaf $v_l$ and find the ancestor $u$ of $v_l$, such that $e_p\not\in S(u)$, 
$e_p\in S(parent(u))$. The element $e$ is inserted into the set $C(u)$ and into lists $P(u)$, $V(u)$.  Suppose that $e_n$ was stored 
in a set $C(u_1)$; we remove $e_n$ from $C(u_1)$, $P(u_1)$, $V(u_1)$ and insert it into corresponding secondary structures 
in the node $u_2$. The node $u_2$ is chosen in such  way that 
$e\not\in S(u_2)$ and $e\in S(parent(u_2))$. When the number of elements in 
some node $u$ becomes equal to $2n^{g_{d-1}}$, where $d$ is the depth of $u$, we split $u$ into $u'$ and $u''$. 
When a node is split, we re-build the data structures in $u'$, 
$u''$ and all their descendants. Thus the total amortized cost
incurred by splitting a node is $O((\log\log N)^2)$. 
The total cost of an insertion is dominated by the time necessary to update 
the data structure that supports reporting queries on $C(u)$.
Deletions are symmetric.
\end{proof}

We will also need another result that uses almost the same data structure, 
but reports only the $k$ leftmost colors in the query range.

\begin{lemma}
\label{lemma:slowcount}
There exists a linear-space data structure that reports, for any integer $k$, the $k$ leftmost (rightmost)colors in a query range  $[a,b]$. 
Queries are supported in \no{$O(n_{a,b}^{1/2}\log^2n+\log \log N +k)$} 
$O(n_{a,b}\log N +\log N \log \log N +k\log N)$ time where $n_{a,b}=|[a,b]\cap S|$ is the number of elements in the query range $[a,b]$. Updates are supported in $O(\log^{\eps} U)$ time. 
\end{lemma}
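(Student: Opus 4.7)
The plan is to reuse the construction of Lemma~\ref{lemma:slowrep}: the same range tree $\cT$, the sets $C(u)$, the lists $P(u)$ and $V(u)$ sorted respectively by $prev$ and by value, and the per-node dynamic reporting structure of~\cite{MPP05}. Updates will therefore keep the $O(\log^{\eps}U)$ cost already established in Lemma~\ref{lemma:slowrep}, and the entire challenge is to emit the $k$ leftmost colors in sorted order at query time.

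To answer a query $[a,b]$, I would first identify the canonical nodes exactly as in the proof of Lemma~\ref{lemma:slowrep}: the nodes on $\pi_a$, their right siblings, the middle children $v_{l+1},\ldots,v_{r-1}$ of $v_q$, the node $v_r$, and the proper ancestors of $v_q$. Their total count is $M=O(n_{a,b}^{1/2}+\log\log N)$. For each canonical node I would extract in bulk the subset of $C(u)$ that realizes leftmost-in-$[a,b]$ occurrences of colors: elements of $V(u)\cap [a,b]$ for nodes on $\pi_a$ and for $v_r$ (already in value-sorted order); elements of $C(u)$ with $prev(e)<a$ for the siblings and for the middle children (located as a contiguous prefix of $P(u)$ and then sorted by value); and elements of $C(w)\cap[a,b]$ for each ancestor $w$ of $v_q$ (produced by the~\cite{MPP05} structure and then sorted by value). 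Finally, I would merge the $M$ sorted arrays with a binary min-heap keyed by value, popping elements one at a time, outputting each color on its first appearance, and halting once $k$ distinct colors have been emitted.

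For the running time I would argue as follows. The total number of elements extracted across all canonical sets is bounded by the number of distinct colors in $[a,b]$, hence by $n_{a,b}$, so the intra-set sorting contributes $O(n_{a,b}\log N)$. Initializing the heap visits each canonical node and spends $O(\log N)$ on a binary search in $V(u)$ or $P(u)$, or a call to the~\cite{MPP05} structure, contributing $O(M\log N)=O(n_{a,b}^{1/2}\log N+\log N\log\log N)$. Each heap pop is $O(\log M)=O(\log N)$, and by the correctness case analysis of Lemma~\ref{lemma:slowrep} each color is produced by the canonical sets a bounded number of times, so $O(k)$ pops suffice. Summing gives the stated bound $O(n_{a,b}\log N+\log N\log\log N+k\log N)$. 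The rightmost variant is symmetric: swap $Min$ and $Max$, reverse the scans, and replace $prev$ with the successor-of-same-color pointer.

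The step I expect to require the most care is correctness of the heap-based extraction. Because the heap pops in value-sorted order and the leftmost occurrence of each color in $[a,b]$ is numerically smaller than every other occurrence of that color in $[a,b]$, the first pop of a given color is necessarily its leftmost occurrence, so a single ``seen-color'' marker table removes the bounded number of duplicates without disturbing the output order. The remaining delicate point is that every color present in $[a,b]$ has its leftmost occurrence captured by one of the canonical subsets; this follows verbatim from the case analysis in Lemma~\ref{lemma:slowrep}, and is what makes the halting condition (``stop after $k$ distinct colors'') correctly produce the $k$ leftmost colors.
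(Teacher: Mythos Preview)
Your proof is correct and achieves the stated bound, but it takes a genuinely different route from the paper. The paper's argument treats Lemma~\ref{lemma:slowrep} almost as a black box: it observes that the reporting procedure of that lemma can be run with an early cutoff (abort as soon as more than $2\tau$ elements have been emitted) to decide in $O(n_{a,b}^{1/2}+\log\log N+\tau)$ time whether the number of distinct colors in a range exceeds a threshold~$\tau$. With this comparison primitive in hand, the paper binary-searches on the right endpoint~$b'$ to locate a value for which $[a,b']$ contains exactly $k$ colors, and then runs one final reporting call on $[a,b']$; $O(\log N)$ iterations give the stated bound.

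Your approach instead opens the canonical decomposition once, materializes every candidate element (at most a constant number per distinct color, by the duplication bound already proved in Lemma~\ref{lemma:slowrep}), sorts within each canonical set, and performs a multi-way heap merge to peel off the first $k$ distinct colors in value order. This is more hands-on but has the pleasant side effect of emitting the colors in order of their leftmost occurrence, which the paper's method does not guarantee. The paper's route is shorter because the only new correctness ingredient is the monotonicity of the color count in~$b'$; yours reuses the internal case analysis of Lemma~\ref{lemma:slowrep} directly and needs the extra heap-ordering argument you flagged. Both comfortably fit the stated (loose) bound: your $O(n_{a,b}\log N)$ term comes from eagerly extracting and sorting up to $O(\text{\#colors})\le n_{a,b}$ representatives, whereas the paper pays $O(n_{a,b}^{1/2}+\log\log N+k)$ per binary-search step across $O(\log N)$ steps.
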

\begin{proof}
The data structure  from Lemma~\ref{lemma:slowrep} can be used to
determine whether the number of points in a query range exceeds a threshold value $\tau$. \no{answer 
the following $tau$-reporting queries: if the number of distinct colors in a query range $[a,b]$ does not exceed a threshold value $\tau$,  we report all distinct colors in $[a,b]$; otherwise we report that the number of colors exceeds $\tau$. To answer a $\tau$-reporting query,}
To compare the number of points, with $\tau$, we  proceed as in Lemma~\ref{lemma:slowrep} and use the fact that every color is reported at most twice. If the number of reported elements in visited nodes exceeds $2\tau$ at some point, we stop processing the query and report that the number of distinct colors exceeds $\tau$. Otherwise, we answer 
a color reporting query and determine whether the number of colors exceeds 
$\tau$. In both cases, a comparison of the number of colors with $\tau$ is performed  in $O(n^{1/2}_{a,b}+\log\log N +\tau)$ time.

We identify the range $[a,b']$ that contains at least $k$ and at most $2k$ colors by binary search.  We start by setting $b'=a+(b-a)/2$ and comparing the number of colors $s(b')$ in  $Q_1=[a,b']$ with $k$. If $s(b')$ is larger than (smaller than) $k$, we move $b'$ to the left (to the right) using a standard binary search procedure.  
After $\log n$ iterations we obtain~$b'$, such that $s(b')=k$. 
Then we answer a color reporting query on $[a,b']$ as in Lemma~\ref{lemma:slowrep}. Each comparison query is answered in 
  $O(n_{a,b}^{1/2}+\log \log N +k)$. $O(\log N)$ iterations take 
$O(n_{a,b}\log N +\log N \log \log N +k\log N)$ time.
\end{proof}

\no{
\begin{lemma}
\label{lemma:slowcount}
There exists a linear-space data structure that reports, for any integer $k$, the $k$ leftmost colors in a query range  $[a,b]$. 
Queries are supported in $O(n_{a,b}^{1/2}\log^2n+\log \log n +k)$ time where $n_{a,b}=|[a,b]\cap S|$ is the number of elements in the query range $[a,b]$. Updates are supported in $O(\log^{\eps} U)$ time. 
\end{lemma}
\begin{proof}
We augment data structures stored in the nodes of $\cT$ with counting data 
structures. 
A data structure $\oV(u)$ reports, for any $[a,b]$, the number of elements 
in $C(u)\cap [a,b]$. A data structure $\oP(u)$ reports, 
for any $[a,b]$ and $c$, the number of elements $e\in C(u)\cap [a,b]$ such 
that $prev(e)<c$. Both data structures use linear space and support queries in $O(\log N)$ time and $O(\log^2 N)$ time respectively~\cite{Chaz??}.

We can identify the range $[a,b]$ that contains at least $k$ and at most $2k$ 
colors using binary search.  We start by setting $b'=a+(b-a)/2$ and estimating the number of colors in $Q_1=[a,b']$. The number of colors in $Q_1$ is estimated by imitating the reporting procedure described above and answering 
counting queries in every visited node. We visit the same nodes as during 
steps~\ref{list:slowstep1}--\ref{list:slowstep4}. We use data structures 
$\oV(u)$ (resp. $\oV(v_r)$ and $\oV(w)$) in nodes visited during steps (\ref{list:slowstep1}), (\ref{list:slowstep2}) and (\ref{list:slowstep4}). 
In each node visited during these steps, we count the number of elements 
$e\in C(u)\cap [a,b]$ (resp. $e\in C(v_r)\cap [a,b]$ or $e\in C(w)\cap [a,b]$). 
We use data structures $\oP(u')$ in nodes $u'$ visited during step (\ref{list:slowstep2}).  In every visited node $u'$, we count the number 
of elements $e\in [a,b]$, such that $prev(e) < a$.  
The sum $s(b')$ of answers to all counting queries gives us an estimate 
on the number of colors in $[a,b']$. As explained in the description of 
the reporting procedure, every color is counted at most twice. 
If $s(b')$ is larger than (smaller than) $2k$, we move $b'$ to the left (to the right) using a standard binary search procedure.  
After $\log n$ iterations we obtain~$b'$, such that $s(b')=2k$. 
Then, we answer a color reporting query on $[a,b']$ as described above.
We observe that the list of distinct colors in $[a,b']$ contains at least $k$ and at most $2k$ elements.
\no{We traverse the list and remove duplicate colors by removing elements 
$e$, $prev(e)\ge a$.} We can obtain $k$ leftmost elements in the resulting list using a standard selection algorithm~\cite{??,??}.

The number of nodes visited during each iteration can be estimated in the same way as in Lemma~\ref{lemma:slowrep}. 
Since we need $O(\log^2N)$ time to answer counting queries in each node, the total time necessary to answer a query is $O(n_{a,b}\log^2 N +\log \log N + k)$.
Updates are processed in the same way as in Lemma~\ref{lemma:slowrep}.
\end{proof}
}

\section{External Memory Solution}
\label{sec:extern}
Our dynamic external memory data structure is based on the same approach 
as the data structure of Theorem~\ref{theor:dyn}. But we need to change 
some of the auxiliary data structures. 
\paragraph{Three-Sided Data Structure for Small Sets}.  
\begin{lemma}\label{lemma:extmem3sid}
There exists a data structure that supports three-sided queries on a set $S$, such that $|S|=O(B\log^6N)$, in $O(1)$ I/Os.
This data structure uses linear space and supports updates in $O(\log^{\eps}N)$ I/Os. 
\end{lemma}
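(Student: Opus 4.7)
The plan is to extend the internal-memory construction of Lemma~\ref{lemma:fast} to the external memory model, leveraging the fact that $|S|=O(B\log^6 N)$ is only polylogarithmically larger than a single block. I would divide the points of $S$ into consecutive groups $G_i$ ordered by $x$-coordinate, each (except possibly the last) containing $\Theta(B)$ points, producing $O(\log^6 N)$ groups in total. The auxiliary values $\gmin(i,h)$ and $\gmax(i,h)$ for $i$ ranging over groups and $h\le\log N$ are maintained exactly as in the proof of Lemma~\ref{lemma:fast}, so that Facts~\ref{fact:simp} and~\ref{fact:sel} continue to hold verbatim.

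Each group $G_i$ fits in $O(1)$ disk blocks, so I would keep an internal-memory three-sided structure $H_i$ inside $G_i$ (for instance a priority search tree) that can be loaded and queried in $O(1)$ I/Os and rebuilt from scratch after every update in $O(1)$ I/Os. For each $h$ the structure $R_h$ stores only $O(\log^6 N)$ values, so I would maintain it as a dynamic external-memory one-dimensional reporting structure over a polylogarithmic ground set, which admits $O(1)$-I/O queries and $O(\log^{\eps}N)$-I/O updates by layering a constant-height external-memory fusion-tree variant on top of the dynamic three-sided reporting structure of Larsen and Pagh~\cite{LarsenP12}. An auxiliary one-dimensional reporting structure $\oR$ on the $x$-coordinates of $S$ also lets us locate, in $O(1)$ I/Os, an arbitrary witness $p_0\in S$ with $p_0.x\in[a,b]$.

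A query $Q=[a,b]\times[c,\log N]$ is then answered exactly as in Lemma~\ref{lemma:fast}: locate a witness with $\oR$; answer inside a single $H_i$ when $[a,b]$ is contained in one group; otherwise compute the $O(1)$ probe values $f_0,\ldots,f_g$ of Fact~\ref{fact:sel}, use each $R_{f_j}$ to locate the representative of every group intersecting $Q$, and report inside those groups via their $H_i$. By Facts~\ref{fact:simp} and~\ref{fact:sel} only $O(1)$ groups are visited and each visit costs $O(1)$ I/Os, giving total query cost $O(1+k/B)$ I/Os. An update touches $O(\log^{\eps}N)$ of the $\gmin$/$\gmax$ values, each triggering a single update in the corresponding $R_h$, and rebuilds one affected $H_i$; a group is split when it exceeds $2B$ points by the standard scheme. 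Combining the two $\log^{\eps}N$ factors and then rescaling $\eps$ by a factor of two yields the stated $O(\log^{\eps}N)$-I/O update bound.

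The main obstacle I anticipate is realizing $R_h$ with simultaneous $O(1)$-I/O queries and $O(\log^{\eps}N)$-I/O updates in the external memory machine, since Willard-style fusion-tree arguments do not transfer directly. The polylogarithmic size of $R_h$'s ground set is essential here: it lets a constant number of machine words encode the entire search path, so block-level fusion-tree-style operations can be carried out in $O(1)$ I/Os while the Larsen--Pagh layer supplies the efficient dynamic reporting interface.
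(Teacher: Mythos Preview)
Your proposal has a genuine gap. The claim that ``by Facts~\ref{fact:simp} and~\ref{fact:sel} only $O(1)$ groups are visited'' is a misreading of those facts: they guarantee that each relevant group is detected via one of $O(1)$ probe values $f_j$, so each group is visited at most $g+1=O(1)$ \emph{times}, but the \emph{number} of groups visited is the number of groups $G_i$ with $G_i\cap Q\neq\emptyset$. With groups of size $\Theta(B)$, a fully spanned group may contribute as few as one point to the answer, so you can visit $\Omega(k)$ groups while reporting only $k$ points, paying $\Omega(k)$ I/Os rather than $O(1+k/B)$. The paper itself makes exactly this observation at the start of the proof of Lemma~\ref{lemma:extfast}: ``The structure of Lemma~\ref{lemma:fast} cannot be directly extended to the external memory \ldots\ the query time would be $O(1+k)$.'' A secondary issue is that the lemma as stated places no bound on $y$-coordinates, while your whole $\gmin/\gmax$ machinery requires $p.y\le\log N$.

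The paper's proof takes a completely different route that exploits only the smallness of $|S|$. It splits on whether $B>\log^{\eps}N$ (in which case the Arge--Samoladas--Vitter structure already has $O(\log_B|S|)=O(1)$ query cost) or $B<\log^{\eps}N$. In the latter case it builds an external priority search tree with leaves of size $B$ and internal fan-out $\Theta(\log^{\eps}N)$; because $|S|=O(B\log^6 N)$, the tree has height $O(1/\eps)=O(1)$. Each internal node stores a \emph{static} Larsen--Pagh three-sided structure on the $Top$ sets of its children, rebuilt from scratch in $O(\log^{2\eps}N)$ I/Os on each update. Queries follow the standard external priority search tree traversal, and constant height gives the $O(1+k/B)$ bound directly.
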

\begin{proof}
If $B>\log^{\eps}N$, the data structure from~\cite{ArgeSV99} gives us the desired query and update bounds because $O(\log_B(|S|))=O(1)$ in this case. 
If $B<\log^{\eps}N$, we implement our data structure as external priority search tree $T$. Every leaf of $T$ contains $B$ points sorted by their $x$-coordinates. As before $S(v)$ denotes the set of all points stored in the leaves of $v$.  Let $Top(v)$ denote the set of $B$ points $p\in S(v)$ with highest $y$-coordinates satisfying  
$p\not \in S(w)$ for any ancestor $w$ of $v$. Each internal node has $\Theta(\log^{\eps} N)$ children. 
We keep a data structure $D(v)$ in each internal node $v$; $D(v)$ contains all points $p\in\cup Top(v_i)$, where the union is over all children $v_i$ of $v$,  and answers three-sided queries. $D(v)$ is implemented as a static data structure described in~\cite{LarsenP12}. When a set $Top(v_i)$ is updated, we re-build $D(v)$.  Since $D(v)$ contains $O(B\log^{\eps}N)=O(\log^{2\eps}N)$ points, it can be re-built in $O(\log^{2\eps}N)$ I/Os.

The query answering procedure is the same as in the external priority tree. For a query $Q=[a,b]\times [c,+\infty]$, let $\pi_a$ and 
$\pi_b$ denote the search paths for $a$ and $b$ respectively. We visit all nodes $u$ on $\pi_a$ and $\pi_b$ and report all points in $Top(u)\cap Q$. Then we visit relevant descendants of nodes $u$ on $\pi_a\cup \pi_b $. In each visited node $u$, we report all points in $Q\cap \cup_{i}Top(u_i)$; we visit a child $u_i$ of $u$ only if $|Q\cap Top(u_i)|\ge B$. Details of the reporting procedure and a proof of its correctness can be found in~\cite{ArgeSV99}. Since the height of our priority tree is $O(1)$, we can update our data 
structure by updating $O(1)$ structures $D(v)$. Replacing $\eps$ by $\eps/2$, we obtain the result of this Lemma.
\end{proof}      

\paragraph{Three-Sided Data Structure on a Narrow Stripe}
Now we show how a data structure of Section~\ref{sec:fast3sid} can be extended to the external memory model. 
\begin{lemma}\label{lemma:extfast}
Let $S$ be a set of two-dimensional points such that $p.y\le \log N$ for all $p\in S$. There exists a linear-space data structure that reports all points 
$p\in [a,b]\times [c,\log N]$ in $O(k/B+1)$ I/Os and supports updates in $O(\log^{\eps}U)$ I/Os. 
\end{lemma}
\begin{proof}
The structure of Lemma~\ref{lemma:fast} cannot be directly extended to the external memory. The reason is that we have to visit 
a number of groups $G_i$ when a query is processed. We would have to spend $O(1)$ I/Os in each group; thus the query time would
 be $O(1+k)$. We need further modifications to obtain the  desired $O(1+k/B)$ query cost.
Our external memory solution is based on increasing the group size. Essentially, we answer a three-sided query in a group $G_i$ only if we know that $G_i$ contains a sufficient number of points from the query range. Otherwise we resort to the data structure that is based on range trees.

As in Lemma~\ref{lemma:fast} we divide points into groups $G_i$. Each group, except the last one, contains $\Theta(B\log^3 N)$ points; the last group contains $O(B\log^3 N)$ points. 
We distinguish between \emph{group-stored} and \emph{directly stored} points. Let $G_i[h]=\{\,p\in G_i\,|\,p.y=h\,\}$. 
A point $p$ is group-stored if $p\in G_i$ and $G_i[p.y]=\Omega(B\log N)$. Otherwise $p$ is directly stored.  
Either all points in $G_i[h]$ for a fixed value of $h$ are group-stored or all points in $G_i[h]$ are directly stored.
We maintain the values of $\gmin(i,h)$ and $\gmax(i,h)$, defined in the same way as in Lemma~\ref{lemma:fast}, 
 with respect to group-stored points.

Let $S_d$ denote the set of directly stored points $p\in S$; let $S_g$ denote the set of group-stored points in $S$. 
All points in $S_d$ are kept in the standard range tree $\bT$ with node degree $\log^{\eps/3}N$. $\bT$ allows us to reduce a 
three-sided query on $S_d$ to $O(1)$ one-dimensional point reporting queries. We refer to~\cite{AlstrupBR00} for details. 
The space usage of $\bT$ is $O(|S_d|\log^{2\eps/3}N)=O(N)$; updates can be implemented by $O(\log^{2\eps/3}N)$ updates of one-dimensional auxiliary structures stored in the nodes of $\bT$. 
We implement these one-dimensional reporting data structures using the result of Mortensen et al.~\cite{MPP05}. Thus $\bT$ supports queries and updates in $O(1)$ and $O(\log^{\eps}N)$ I/Os respectively.
All points in $G_i\cap S_g$ for each group $G_i$ are kept in a data structure $H_i$ that supports three-sided range reporting queries as described in Lemma~\ref{lemma:extmem3sid}.

When a new point $p\in S$ is inserted into $S$, we identify the group $G_i$ where it belongs. If points in $G_i[p.y]$ are group-stored, then 
$p$ should be  group-stored. We compute $h_{r,s}$ for $h=p.y$ and update the values of $\gmin(i,h_{r,s})$, $\gmax(i,h_{r,s})$ as in Lemma~\ref{lemma:fast}. If points in $G_i[p.y]$ are directly stored, then we check how many points are currently in 
$G_i[p.y]$.  If $|G_i[p.y]|=(3B\log N)/2$, then points in $g_[p.y]$ will be group-stored. 
We remove all points of $G_i[p.y]$ from the tree $\bT$ and add them to $H_i$. 
Then we identify the leftmost and the rightmost points in $G_i[p.y]$ and update $\gmin(i,h_{r,s})$, $\gmax(i,h_{r,s})$ for $h=p.y$. 
If $|G_i[p.y]|< (3B\log N)/2$, we continue to store $G_i[p.y]$ directly.  In the latter case we simply add $p$ to $\bT$. 
Deletions are symmetric to insertions. However if points in $G_i[p.y]$ are group-stored and $p$ is deleted, we continue 
to group-store the points in $G_i[p.y]$ if $G_i[p.y]\ge (B\log N)/2$.

For a  query $Q=[a,b]\times [c,+\infty]$, we report all points in $S_d\cap Q$ using $\bT$. We identify groups $G_i$ such that 
$(G_i\cap S_g)\cap Q\not=\emptyset$ in the same way as in Lemma~\ref{lemma:fast}. For every such $G_i$ we report all points 
in $(G_i\cap S_g)\cap Q$ using $H_i$.  If $\min\{\,p.x\,|p\in G_i\,\}\ge a$
and $\max\{\,p.x\,|\,p\in G_i\,\}\le b$ for some group $G_j$, then either  $(G_j\cap S_g)\cap Q=\emptyset$ or $|(G_j\cap S_g)\cap Q|\ge B\log N/2$. Hence, the total cost of answering queries in all $H_i$ is $O(1+|S_g\cap Q|/B)$. 
Thus a query $Q$ is answered in $O(1+k/B)$ I/Os.
\end{proof}

\paragraph{Color Reporting in External Memory.}
We observe that only the case when the block size  $B\le N^{1/16}$ should be considered. If $B>N^{1/16}$, then we can use the reduction of color reporting queries to three-sided queries. The data structure of Arge et al~\cite{ArgeSV99} supports three-sided queries and updates in $O(\log_B N)=O(1)$ I/Os for $B>N^{1/16}$.  In the rest of this section we assume that $B\le N^{1/16}$.
We will only sketch the differences of our data structure and the data structure of Theorem~\ref{theor:dyn}

Our base tree contains $B\log N$ elements in every leaf. We keep the data structure of Lemma~\ref{lemma:extmem3sid} in each leaf node.

\setcounter{section}{0}
\renewcommand\thesection{A.\arabic{section}}

\end{document}